\newtheorem{thm}{Theorem}
\newtheorem{cor}{Corollary}
\newtheorem{prob}{Problem}
\newtheorem{rem}{Remark}
\newtheorem{ass}{Assumption}
\title{\LARGE \bf
Prediction-Based Leader-Follower Rendezvous Model Predictive Control with Robustness to Communication Losses
}
\author{Dženan Lapandić$^{1}$, Christos K. Verginis$^{2}$, Dimos V. Dimarogonas$^{1}$ and Bo Wahlberg$^{1}$
\thanks{This work was supported by the Wallenberg AI, Autonomous Systems and Software Program (WASP) and the Swedish Research Council, Knut and Alice Wallenberg Foundation (KAW).}
\thanks{$^{1}$Dženan Lapandić, Dimos V. Dimarogonas and Bo Wahlberg are with  Division of Decision and Control Systems, KTH Royal Institute of Technology, Stockholm, Sweden.   {\tt\small lapandic,dimos,bo@kth.se}}%
\thanks{$^{2}$Christos K. Verginis is with Division of Signals and Systems, Department of Electrical Engineering, Uppsala University,
Uppsala, Sweden. {\tt\small christos.verginis@angstrom.uu.se}}}
\begin{document}

\maketitle
\thispagestyle{empty}
\pagestyle{empty}

\begin{abstract}
In this paper we propose a novel distributed model predictive control (DMPC) based algorithm with a trajectory predictor for a scenario of landing of unmanned aerial vehicles (UAVs) on a moving unmanned surface vehicle (USV). 
The algorithm is executing DMPC with exchange of trajectories between the agents at a sufficient rate. 
In the case of loss of communication, and given the sensor setup, agents are predicting the trajectories of other agents based on the available measurements and prior information.
The predictions are then used as the reference inputs to DMPC. 
During the landing, the followers are tasked with avoidance of USV-dependent obstacles and inter-agent collisions. 
In the proposed distributed algorithm, all agents solve their local optimization problem in parallel and we prove the convergence of the proposed algorithm. 
Finally, the simulation results support the theoretical findings.
\end{abstract}

\section{INTRODUCTION}
The problem of autonomous navigation and landing of unmanned aerial vehicles (UAVs) on other autonomous agents has drawn a considerable attention in recent years. 
The coordination and control of UAVs with optimization-based methods can be challenging in such scenarios due to the limited computational resources on-board. 
In that respect, a relevant methodology is cooperative distributed model predictive control whose aim is to optimize control inputs of a considered agent given the predicted trajectories of other (neighbouring) agents over the planning horizon. 
Each agent solves the finite-horizon distributed optimal control problem (DOCP), applies its control input and broadcasts the predicted trajectory to other agents \cite{muller2012cooperative,christofides2013distributed,verginis2018communication}. 

The considered agents in our setup are equipped with sensors, i.e. a camera on the quadrotor (UAV) side, and a radar on the boat (USV) side, that enable the accurate position measurement of the neighbouring agents with a frequency significantly higher than the real-time execution frequency of the MPC. 
In the case of loss of communication between the agents, an agent is left only with the data history and locally available measurements. 
During challenging maneuvers such as landing or navigating through a space with moving obstacles, the frequency of execution of the DMPC relative to the moving speeds of the agents and obstacles can severely affect the performance and safety. 
This is especially the case when the references are based only on the current measurements and not on the committed trajectories, which indeed occurs in the loss of communication scenario. 
Note that, the communication losses we consider can be both temporary and permanent, because even an occasional package drop for few seconds that can occur during the hardware-in-the-loop experiments can lead to disastrous outcomes.

The hierarchical approaches presented in the literature \cite{lopez2018robust,csomay2022multi} deal with the described problem by separating the control in different layers. 
Usually, there is a high-level planning layer and a safety layer, for example with Control Barrier Functions (CBF) operating on a much higher frequency that is tasked with safety in the presence of obstacles \cite{rosolia2022unified}. 
Fundamentally, the concept of MPC is to unify these layers. 
Both approaches have their advantages and disadvantages and all aspects of the chosen concept must be taken into account to avoid unwanted behaviors. 
One of the ways is to use a prediction scheme to complement the available data. In \cite{falanga2017vision}, the authors propose a vision-based framework with the state estimation for the ground vehicle considered as a moving target. Estimation-based control protocol proposed in \cite{song2022distributed} uses only local observations of the state of the neighboring agents for rendezvous and flocking control. In another estimation-based protocol, authors in \cite{lymperopoulos2008adaptive} use particle filter to predict aircraft trajectories. Prediction-based navigation in a decentralized event-based scheme is studied in \cite{maniatopoulos2012decentralized}. Moreover, learning-based prediction approaches for multi-agent systems in the recent years became very popular area of research \cite{felsen2018will, li2019interaction, zhao2019multi,li2020evolvegraph, cao2022advdo}. Unfortunately, most of these approaches require extensive amount of data for the training purpose and direct transferability to our particular application is unknown.

In this paper, we case the multi-agent and heterogeneous (involving UAVs and USVs) rendezvous problem as a leader-follower network with one leader and one or more follower agents for autonomous landings. 
The leader in our case is the USV boat and UAVs are considered as followers. 
We formulate the problem as a cooperative distributed model predictive control problem with collision avoidance constraints. 
In the case when the predicted trajectory is available to the followers, and the follower is able to dynamically follow the given reference trajectory, the landing can be executed using a relatively simple MPC for a single agent follower. 
However, a sudden communication loss can severely destabilize the landing leading to a collision with obstacles on the boat. 
Based on this observation, we adopt a prediction scheme in the case of communication loss to enhance the safety and performance.

Moreover, we generalize the scheme to multiple-follower rendezvous in which the inter-agent collisions must be handled. 
There are several challenges in the inter-agent collision avoidance in this case. 
First, in the sequential application of the algorithm each agent is optimizing its control strategy based on the shared trajectories that other agents have committed to. 
In this case, each agent must wait until all other agents have shared their new trajectories to begin the computation of its next control input. 
Otherwise, using the old shared trajectories without additional safety measures can lead to a collision. 
Second, in the parallel version the agents operate with the most recent available shared trajectories and account for the worst-case deviations as a safety measure. 
However, this can be very conservative and lead to deadlock.
Third, if one or more agents lose the communication for some duration safety can be compromised. 
This, in some sense, resembles a non-cooperative scenario and we propose appropriate safety measures.

The contributions of this paper are 
\begin{itemize}
    \item a rendezvous algorithm based on leader-follower DMPC formulation for autonomous landing in presence of obstacles
    \item a convergence proof for a probabilistic safe landing in the case of communication loss
    \item an inter-agent collision avoidance robust to communication losses 
\end{itemize}

The paper is organized as follows. The problem formulation is given in Sec.~\ref{sec:prob_form} and the control architecture is described in Sec.~\ref{sec:ctrl_arch}. In Sections~\ref{sec:convergence} and ~\ref{sec:rendezvous_algorithm}, one-follower and multiple-follower cases are examined and the proposed rendezvous algorithm is stated. Finally, Sec.~\ref{sec:sim_results} presents the simulation results and Sec.~\ref{sec:conclusion} concludes the paper.
\subsection{Notation}
We denote the discrete time step with $t$ and set of agents with $\mathcal{N} := \{ l,f_1,...,f_M \}$.
The state trajectories are denoted with $x_i(t)$, the predicted nominal state trajectories with $\hat{x}_i(k|t)$ and optimal state trajectories with $\hat{x}_i^*(k|t)$ for an agent $i$ at time step $k+t$ predicted at time step $t$. The collection of trajectories from time step $t$ until $t+N$ is denoted with $x_i(\cdot|t)$; $\mathcal{B}_{r}:=\{ a: \| a \| \leq r\}$ is a ball of radius $r$; $ \oplus $ is Minkowski sum addition defined as $x \oplus S:=\{ x + a: a \in S\}$. 
For an element $i\in\mathcal{N}$, we denote a set excluding the element $i$ with $\mathcal{N}_{-i}:= \mathcal{N} \setminus \{i\}$.

\section{PROBLEM FORMULATION}\label{sec:prob_form}

Consider a multi-agent system consisting of one leader and $M$ follower agents that are dynamically decoupled and behaving according to the nonlinear discrete time dynamics
\begin{equation}\label{eq:multi-agent_system}
x_i(t+1)=f_i(x_i(t),u_i(t)) + w_i(t),
\end{equation}
where $i \in \mathcal{N} = \{ l,f_1,...,f_M \}$, $l$ denotes the leader, $f_1,...,f_M$ denote the followers, $M$ is the number of follower agents, $x_i \in \mathcal{X}_i \subseteq \mathbb{R}^{n_i}$, $u_i \in \mathcal{U}_i\subseteq \mathbb{R}^{m_i}$ denote the state and input of an agent $i$ that are subject to the state and input constraints $\mathcal{X}_i$ and $\mathcal{U}_i$, respectively, and $w_i \in \mathcal{W}_i \subset \mathbb{R}^{n_i}$ is unknown but bounded disturbance in a compact set $\mathcal{W}_i$. 

The follower agents have the same dynamics and state space of a quadrotor UAV as in \cite{lapandic2022robust} which are different from the leader ones modeled as a 3DoF boat USV model \cite{fossen2011handbook}. We assume that the landing platform is rigidly attached to the boat USV and neglect the heave (vertical motion), roll and pitch motion.  Furthermore, the first three elements of state vectors of all agents denote the position  $p_i\in\mathbb{R}^3$. 

The overall multi-agent system dynamics in stack-vector form are 
\begin{equation}\label{eq:mas_stacked}
    x(t+1) = f(x(t),u(t)) + w(t),
\end{equation}
where $x=[x_l^T,x_{f_1}^T,...,x_{f_M}^T]^T$, $u=[u_l^T,u_{f_1}^T,...,u_{f_M}^T]^T$, $w=[w_l^T,w_{f_1}^T,...,w_{f_M}^T]^T$. The state and input constraints of the overall system are $x \in \mathcal{X} := \mathcal{X}_l \times \mathcal{X}_{f_1} \times ... \times \mathcal{X}_{f_M} $, and $u \in \mathcal{U} := \mathcal{U}_l \times \mathcal{U}_{f_1} \times ... \times \mathcal{U}_{f_M} $, the set of all disturbances is $w \in \mathcal{W} := \mathcal{W}_l \times \mathcal{W}_{f_1} \times ... \times \mathcal{W}_{f_M}$.

We formulate the inter-agent collision avoidance constraints as
\begin{equation}\label{eq:collision_formal_formulation}
    h_{ij}(x_i(t),x_j(t)) \geq 0, \quad \forall i,j\in \mathcal{N}, i \neq j,
\end{equation}
where $h_{ij}:\mathcal{X}_i \times \mathcal{X}_j \rightarrow \mathbb{R}$ is a function that encodes collisions and will be defined later. Given a solution $x(t)$ of the system \eqref{eq:mas_stacked}, if the constraint functions \eqref{eq:collision_formal_formulation} are satisfied for all $t\geq t_0$ then $x(t)$ is a collision-free solution, i.e. $x(t)\in \mathcal{F} \subseteq \mathcal{X}$, where $\mathcal{F}$ denotes the collision-free space.

This paper considers the problem of navigating the leader agent to follow a given reference and follower agents to a rendezvous position with respect to the leader while avoiding inter-agent collisions. 
We denote the leader reference $x_{r,l}(t)=x_r(t)$ and references of the followers $x_{r,i}(x_l(t),c_i):=[(p_l(t) + c_i)^T,0_{n_i-3}^T]^T$ with respect to the position $p_l(t)$ of the leader $x_l$ and a given offset $c_i \in \mathbb{R}^3$ encoding the particular landing position of an $i$-th agent, $i\in \mathcal{N}_{-l}$. With a slight abuse of notation, the problem treated in the paper can be stated as follows.
\begin{prob}\label{prob:formulation}
    Consider a multi-agent system \eqref{eq:multi-agent_system}. Design a control policy $u$ such that
    \begin{align*}
        &\phantom{x}x(t) \in \mathcal{F}, \quad t\geq t_0 \\
        &\lim_{t\rightarrow \infty} (x_l(t),u_l(t)) = (x_r(t),0_{n_l}) \\
        &\lim_{t\rightarrow \infty} (x_i(t),u_i(t)) = (x_{r,i}(x_l(t),c_i),0_{n_i}), \forall i \in \mathcal{N}_{-l}
    \end{align*}
\end{prob}

Moreover, we assume that agents are able to communicate and share their current and predicted positions $\hat{z}_i(\cdot|t)$ asynchronously. 
Because the follower agents have different dynamics from the leader, and different state vectors, we define the following mapping from the leader state space $\mathcal{X}_l$ to the follower state space $\mathcal{X}_i$, $i\in \mathcal{N}_{-l}$
\begin{equation}
\hat{z}_{l}(t) = H\hat{x}_l(t), \label{eq:ch6_measurement_z} 
\end{equation} 
where $H = \text{diag}(I_3,0_{(n_i - 3)\times (n_l -3)})$. This mapping effectively maps only the position of the leader to the follower state space. For the follower agents it holds $\hat{z}_{i}(t) = \hat{x}_i(t)$, $i\in \mathcal{N}_{-l}$.

When the communication is without losses, all follower agents have access to $\hat{z}_i(\cdot|t-1)$ for all $i\in \mathcal{N}$, where all data arrived between time steps $t-1$ and $t$ is cast as data from time step $t-1$.
However, we want that the designed control policy $u$ is able to achieve the goal stated in Problem~\ref{prob:formulation} even in the case of communication loss.
In the context of this paper, the communication loss is considered as an inability of an agent to retrieve the latest shared data from another agent. More formally, the communication loss is an inability of an agent $i$ at time step $t$ to retrieve the shared data $\hat{z}_j(\cdot|t-k)$ from another agent $j$ generated at time step $t-k$, where $k \geq 1$.
Thus, we will impose stronger assumptions on the behaviour of the followers that will be stated in Sections~\ref{sec:rendezvous_algorithm} and \ref{sec:convergence}.

\section{CONTROL ARCHITECTURE}\label{sec:ctrl_arch}
\subsection{Distributed MPC Formulation}
We choose to address Problem~\ref{prob:formulation} using distributed MPC where each agent solves a distributed optimal control problem for a planning horizon of $N$ time steps and applies the first control input during the control horizon $\Delta t$ \cite{muller2012cooperative}. Note that all time steps in the control architecture are discrete $t\in \mathbb{N}_0$ and $\Delta t$ denotes the continuous time duration between two time steps. 
The procedure is then repeated at every time step for each agent.

Let $\hat{x}_i(k|t)$ be the nominal state trajectory at time $k+t$, $k=0,1,...,N$ calculated at time instant $t$, where $\hat{x}_i(0|t)=x_i(t)$, governed by the following difference equation
\begin{equation}
\hat{x}_i(k+1|t)=f_i(\hat{x}_i(k|t),u_i(k|t)) \label{eq:nominal_dynamics}
\end{equation}
Each agent is provided with the state reference trajectory at time step $t$ until $t+N$ which is given as $x_{r,i}(\cdot|t) = \{x_{r,i}(0|t),...,x_{r,i}(N|t)\}$. The references are provided by the proposed algorithm which will be elaborated in Sections~\ref{sec:convergence} and \ref{sec:rendezvous_algorithm}. The control objective of each agent at time step $t$ is to minimize the following cost function
\begin{equation}\label{eq:cost}
J_i(\hat{x}_i(\cdot|t),u_i(\cdot|t),x_{r,i}(\cdot|t),N,t) = \sum_{k=0}^{N}  \left \| \hat{x}_i(k|t) - x_{r,i}(k|t)  \right \|_{Q_i}^2 
\end{equation}%
 while respecting the constraints, where the summand represents the stage cost and $Q_i$ is a positive-definite weighting matrix.  
We denote also $J=\sum_{i\in \mathcal{N}} J_{i}$.

We formulate the distributed optimal control problem with respect to the objective.
\begin{prob}\label{prob:ch6_docp_standard}
Let the states of the agents at time $t$ be $x_i(t)$, $i\in \mathcal{N}$. Given the references $x_{r,i}(\cdot|t)$ and the predicted trajectories of other agents $\hat{z}_j(\cdot|t)$, $j\in \mathcal{N}_{-i}$, the distributed optimal control problem is formulated as 
\begin{subequations}\label{eq:ch6_mpc_problem}
\begin{equation}
    \min_{u_i(\cdot|t)}
J_i(\hat{x}_i(\cdot|t),u_i(\cdot|t),x_{r,i}(\cdot|t),N,t)    \label{eq:ch6_8} 
\end{equation}
subject to
\begin{align}
    &    \hat{x}_i(k+1|t)=f_i(\hat{x}_i(k|t),u_i(k|t)),\label{eq:ch6_p_1} \\
    &\hat{x}_i(k|t) \in \mathcal{X}_i,\label{eq:ch6_p_2} \\
    &u_i(k|t) \in \mathcal{U}_i,\label{eq:ch6_p_3} \\
    &\hat{x}_i(k|t) \in \mathcal{X}_{i,j}(\hat{z}_j(k|t)), \quad \text{for } i \in \mathcal{N}_{-l}, j \in \mathcal{N}_{-i},\label{eq:ch6_p_4}
\end{align}
\end{subequations}
for $k=0,1,...,N$, for all $i\in \mathcal{N}$.
\end{prob}

Set $\mathcal{X}_i$ denotes the set of model state constraints and $\mathcal{U}_i$ the input constraints. 
$\mathcal{X}_{i,j}(\hat{z}_j(k|t))$ is the set of spatiotemporal safety constraints with respect to the other agents that will be formulated in the next section. Note that the leader is not subjected to the inter-agent collision constraints \eqref{eq:ch6_p_4}, but only the follower agents.

\begin{figure}[t]
    \centering
      \raisebox{-0.5\height}{
        \includegraphics[width=0.47\linewidth]{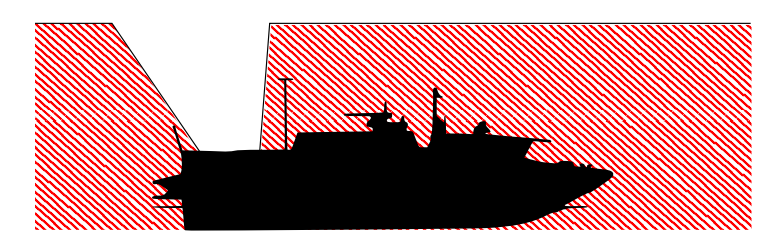}}
        \hfill
    \raisebox{-0.42\height}{      \includegraphics[width=0.47\linewidth]{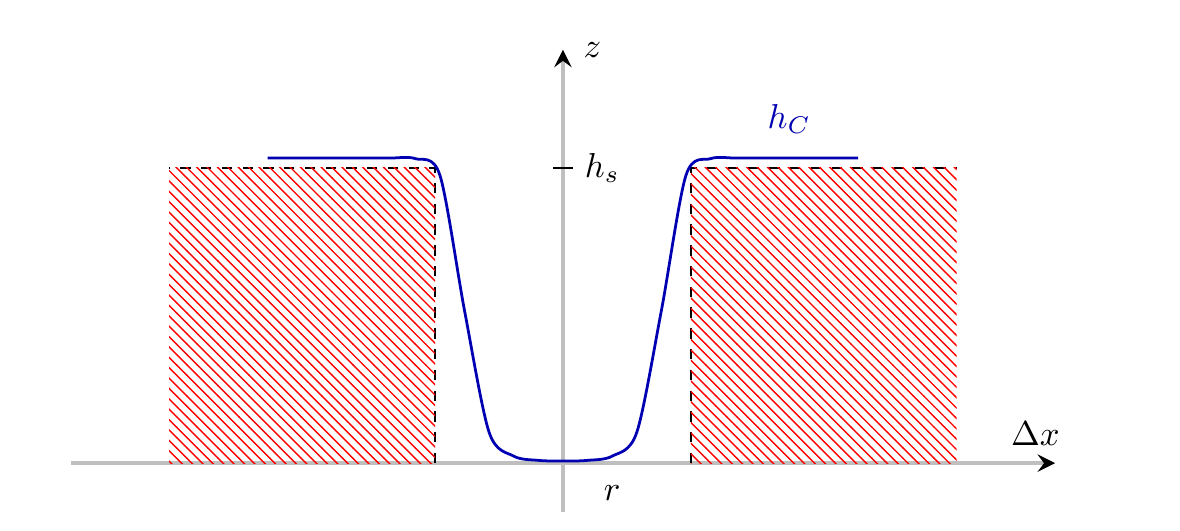}}
    \caption{Restricted area and side view of the constraint $h_C$ from \eqref{eq:ch6_nonlinear_cstr} }
    \label{fig:ch6_restricted_flying_area}
\end{figure}

\subsection{Spatiotemporal Safety Constraints}
The desired specification on UAV movements in the 3D space are to fly above some prescribed height and avoid obstacles.
However, in the problem of UAV landing on a boat, we are particularly interested in the obstacles that arise from the boat shape, equipment and its movement. 
These specific obstacles surrounding the landing area on the boat are depicted in Figure~\ref{fig:ch6_restricted_flying_area}.

This area can be modeled as two convex areas with a binary variable that depends on the altitude of the UAV and determines which of the two constraints should be enforced within the mixed-integer optimization problem \cite{persson2019model}. 
However, it is of interest to model the restricted flying area with one nonlinear continuous function such that the usage of mixed-integer programming is avoided due to the computational burden it can cause. 

In the following considerations, we assume the landing platform is circular with the radius $r$, $r_{safe} < r$ is the safety radius needed for safe landing and $h_s$ is the safety height below which the UAV is not allowed to descent unless above the landing platform. Let $z_l = [p_{x_c},p_{y_c},0_{n_f-2}^T]^T\in\mathcal{X}_f$ be the center of the boat landing platform and $x_f=[p_x,p_y,p_z,0_{n_f-3}^T]^T\in\mathcal{X}_f$ be the position of the UAV.

We define the constraint function $h_{C}$ in a novel way as 
\begin{equation}\label{eq:ch6_nonlinear_cstr}
    h_{C}(x_f,z_l):= p_z - \frac{h_s}{1+e^{-\beta ((p_x-p_{x_c})^2 + (p_y-p_{y_c})^2 - r^2)}} \geq 0
\end{equation}
where $\beta>0$ is a tuning parameter that determines the slope of the funnel. 

Requiring that $h_{C}(x_f,z_l) \geq 0$, the UAV will always be above the boat position-dependent constraint function. The boundary of the restricted area defined with $h_{C}$ in 3D is visible on Fig.~\ref{fig:ch6_cstr}.
\begin{figure}
        \centering
        \includegraphics[width=0.5\linewidth]{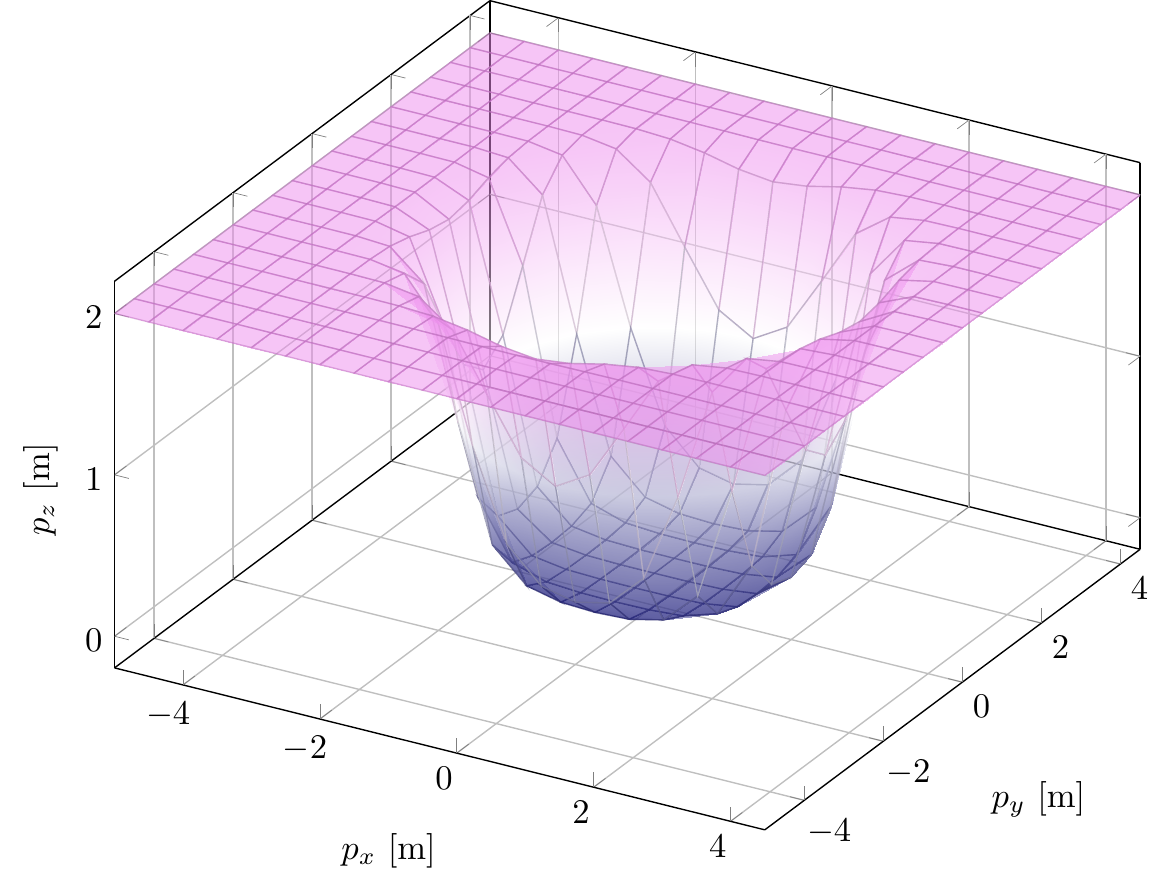} 
        \caption{$h_{C}$ from \eqref{eq:ch6_nonlinear_cstr}, for $h_s=2,r=2.5,\beta=1$}\label{fig:ch6_cstr}
    \vspace{-0.7cm}
\end{figure}

The spatiotemporal constraint imposed on the position of the follower $\hat{x}_{i}(k|t)$, $i\in \mathcal{N}_{-l}$ at time step $t+k$, predicted at time $t$, for $i=0,1,...,N$, with respect to the future trajectory of the leader position $\hat{z}_l(k|t)$ is
\begin{equation}
 h_C(\hat{x}_{i}(k|t),\hat{z}_l(k|t))\geq 0 \quad \text{for all } k=0,...,N.
\end{equation}
Therefore, we define the safe set $\mathcal{X}_{i,l}(\hat{z}_l(k|t))$ in \eqref{eq:ch6_p_4} for all $i\in \mathcal{N}_{-l}$ as 
\begin{equation}
\mathcal{X}_{i,l}(\hat{z}_l(k|t)) := \left \{\hat{x}_{i}(k|t) :  h_C(\hat{x}_{i}(k|t),\hat{z}_l(k|t))\geq 0 \right \}.
\end{equation}
Note that in the current formulation we assume that the height of the landing platform is zero.

\subsection{Inter-agent Collision Avoidance}
The previous section defines the safety constraint function that each follower enforces in its optimization problem with respect to the leader. The inter-agent collision avoidance condition is formulated on a set of follower agents $\mathcal{M}=\mathcal{N}_{-l}$ as 
\begin{equation}
\left \| C(\hat{x}_i(k|t) - \hat{z}_j(k|t)) \right \| \geq R, \text{  for } j\in\mathcal{M}_{-i}\label{eq:ch6_coll_avoid}
\end{equation}
for $k=0,1,...,N$, where $R>0$ is the minimal distance between the follower agents. The matrix $C=\text{diag}(1,1,c,0_{n_i-3})$, with $c>1$, determines the shape of the super ellipsoidal set constraint such that it prevents the collision and the downwash effect that can occur if a UAV ends up below another.

Moreover, the condition \eqref{eq:ch6_coll_avoid} can be stated in form \eqref{eq:collision_formal_formulation} as 
\begin{equation*}
    h_{ij}(\hat{x}_{i}(k|t),\hat{z}_j(k|t)):= \left \| C(\hat{x}_i(k|t) - \hat{z}_j(k|t)) \right \| - R \geq 0
\end{equation*}
and the safe set $\mathcal{X}_{i,j}(\hat{z}_j(k|t))$ in \eqref{eq:ch6_p_4} for all $i\in \mathcal{M}$ and $j \in \mathcal{M}_{-i}$ is defined as 
\begin{equation*}
\mathcal{X}_{i,j}(\hat{z}_j(k|t)) := \left \{\hat{x}_{i}(k|t) : h_{ij}(\hat{x}_{i}(k|t),\hat{z}_j(k|t)) \geq 0 \right \}.
\end{equation*}

\subsection{EKF Predictor}
In order to handle the communication losses, we equip the agents with an Extended Kalman Filter (EKF) as an $N$-step predictor of the future trajectory of the other agents. EKF Predictor is formulated for a general nonlinear system $x^+ = f(x)$ with linearized dynamics around a measured state and unknown input. The $N$-step prediction is done by repeating the prediction step $N$ times in open loop. 

For brevity, we do not state the EKF matrices but refer the reader to the relevant literature \cite{kalman1960new}. The outputs of the EKF Predictor used in the control architecture are the predicted trajectories $\hat{z}_{p,i}(\cdot|t)$ and the prediction covariance matrices $P_{p,i}(\cdot|t)$. The prediction covariance matrix $P_{p,i}(k|t)$ provides us with the estimate of a super ellipsoidal set determined by its eigenvalues $\lambda_k(P_{p,i})$. 

\section{RENDEZVOUS ALGORITHM}\label{sec:rendezvous_algorithm}
The multi-agent formulation in which we have multiple followers is challenging due the possible inter-agent collisions even in the case when all agents share their committed trajectories. In this section, we formulate the algorithm for multiple-follower rendezvous landing.

We assume that the landing platform is large enough to accommodate $M$ agents that can land simultaneously at prespecified positions $c_i\in \mathcal{B}_{r-r_{safe}}$ relative to the center of the landing platform. Moreover, due to the application testbed in hand \cite{andersson2021wara}, we will constrain the scheme to $M$ agents, and additional follower agents can be included if a prioritization procedure is included in the proposed algorithm. We state the following assumption on the initial conditions and feasibility that will be used to analyze the algorithm convergence.

\begin{ass}\label{ass:initial_terminal_conditions}
    All agents at time $t_0$ have initial conditions $x_i(t_0)$, $i\in \mathcal{N}$ such that 
    \begin{equation*}
        x_i(t_0) \in \mathcal{X}_{i,j}(z_j(t_0)) \quad \text{for } i\neq l, j \in \mathcal{N}_{-i}.
    \end{equation*}
    Moreover, it holds that
    \begin{equation*}
        \left \| c_i - c_j \right \| > R, \text{  for all } i,j \in \mathcal{M}, i\neq j.
    \end{equation*}    
\end{ass}

Before we state the algorithm, let us introduce a data collection $D_i(t)$ that is available to an agent $i$ at time step $t$ as $D_i(t) = \{ \hat{z}_{j}(\cdot|t_{j}) \}_{j\in \mathcal{N}_{-i}}$ consisting of the shared future trajectories of all other agents broadcast at time $t_{j}<t, \forall j\in \mathcal{N}_{-i}$.

\begin{figure}[h!]
\vspace{-0.6cm}
\begin{algorithm}[H]
\caption{Multiple-Follower Rendezvous Algorithm}\label{alg:ch6_MF_rendezvous}
\begin{algorithmic}[1]
\Require initial states $x_{i}(0)$ at time $t=0$, landing locations $c_i$ according to Assumption~\ref{ass:initial_terminal_conditions} for $i \in \mathcal{M}$, a tolerance parameter $\varepsilon$;

\For{each agent $i \in \mathcal{M}$}
	\State update $D_i(t)$ and $\hat{z}_j(\cdot|t)$, $j\in \mathcal{N}_{-i}$ using \eqref{eq:ch6_shift_and_predict}
	\State $\hat{x}_i^*(\cdot|t),\hat{u}_i^*(\cdot|t) \gets $ solve Problem~\ref{prob:ch6_docp_standard}
\If{$ \left \| \hat{x}_i(t) - \hat{z}_l(t) + c_i \right \| > \varepsilon$}  
    \State apply $\hat{u}_i^*(0|t)$
    \State broadcast $\hat{x}_i^*(\cdot|t)$
    \State $t \gets t+1$
\EndIf
\EndFor
\end{algorithmic}
\end{algorithm}%
\vspace{-0.8cm}
\end{figure}

\begin{rem}
    An issue that can occur in the multi-agent case with inter-agent collision avoidance in general, is that the agents can end up in a deadlock and be prevented to effectively find a way to navigate to the goal position. In that case, the deadlock can be resolved by forcing the agents to solve Problem~\ref{prob:ch6_docp_standard} sequentially \cite{richards2007robust,christofides2013distributed}. This would guarantee that each agent takes into account the current predicted trajectory of other agents and waits until the process is completed. Thus, the generated trajectories will not end up in a deadlock. Note that communication in this case is required. 
    Therefore, Algorithm~\ref{alg:ch6_MF_rendezvous} requires that the first iteration of the algorithm is done in a sequential manner and that the initial feasibility is established.
\end{rem}

\section{CONVERGENCE}\label{sec:convergence}
Let us consider the one-follower case of the rendezvous landing problem, $\mathcal{N}=\{l,f\}$, in which the leader reference $x_{r,l}(t)$ is given, and the follower reference is based on the position of the leader $x_{r,f}(t) = z_l(t)$ that corresponds to the center of the landing platform. In order to analyze system behavior, we introduce an assumption on the follower's capability to track the leader with respect to the leader dynamics and spatiotemporal constraints.

\begin{ass}\label{ass:ch6_follower_capability}
There exists a control law $\kappa: \mathcal{X}_f \times \mathcal{X}_f \rightarrow \mathcal{U}_f$ such that 
\begin{equation}
\| x_f^+ -z_l^+  \|^2 \leq \rho \| x_f - z_l  \|^2
\end{equation}
with $\rho \in (0,1)$ and
\begin{align*}
x_f^+ &= f_f(x_f,\kappa(x_f,z_l)) \\
x_l^+ &= f_l(x_l,u_l), \\ 
z_l &= Hx_l, \\
h_{C}&(x_f,z_l) \geq 0
\end{align*}
for all $x_f \in \mathcal{X}_f, x_l  \in \mathcal{X}_l, $ and $u_l \in \mathcal{U}_l$.
\end{ass}

Assumption~\ref{ass:ch6_follower_capability} states that for any control action the leader takes, there exists a control law for the follower that will reduce the distance between them in every consecutive time step. 
However, in the case of communication losses, the follower must be capable to asymptotically reduce the distance regardless of the leader's behaviour.

Furthermore, Assumption~\ref{ass:ch6_follower_capability} is similar to Assumptions 4 and 6 in \cite{kohler2018nonlinear} from which the notion of incremental stability as in \cite[Def. 2.1]{angeli2002lyapunov},\cite{tran2016incremental} can be elaborated. Compared to \cite{kohler2018nonlinear}, the reference $z_l$ in our case has different dynamics than the followers' and it does not take into account the control action of the leader.

Referring to Problem~\ref{prob:ch6_docp_standard}, let the value function at time step $t$ be 
\begin{align*}
    V_N(x_f(t),z_l(t)) &= \min_{u_f(\cdot|t)}
J_f(\hat{x}_f(\cdot|t),u_f(\cdot|t),\hat{z}_l(\cdot|t),N,t) \\
&= \sum_{k=0}^{N-1}  \left \| \hat{x}_{f}^*(k|t) - \hat{z}_l(k|t)  \right \|_{Q_{f}}^2
\end{align*}
We define the region of attraction $\mathcal{X}_f^{ROA}(z_l) := \left \{ x_f\in \mathcal{X}_f : V_N(x_f,z_l) \leq V_{N,max} \right \}$ of the MPC controller as the set of states which can be steered to the desired leader state $z_l$ in $N$ or fewer steps.

The convergence result for the case of one follower is based on ensuring that for all initial states in the region of attraction, the value function is a Lyapunov function decreasing at each time step.

\begin{thm}[Convergence with one follower]
Let Assumption~\ref{ass:ch6_follower_capability} hold. For any $V_{N,max} \in \mathbb{R}_{>0}$, there exist constants $\gamma \geq 1$, and $N_0 \in \mathbb{N}$, such that for all $N>N_0$ and all initial conditions in the region of attraction $x_f(0) \in \mathcal{X}_f^{ROA}(z_l(0))$, there exists $\alpha_N \in \mathbb{R}_{>0}$ such that the multi-agent system~\eqref{eq:multi-agent_system}, with  $w_i=0$, $i\in \mathcal{N}=\{l,f\}$ satisfies
\begin{equation*}
\| x_f - z_l \|_{Q_f}^2  \leq V_N(x_f,z_l) \leq \gamma \| x_f - z_l \|_{Q_f}^2 ,
\end{equation*}
\begin{equation*}
V_N(x_f^+,z_l^+) - V_N(x_f,z_l)  \leq - \alpha_N  \| x_f - z_l \|_{Q_f}^2,
\end{equation*}
for all $t\geq 0$. Furthermore, for all $x_f(0) \in \mathcal{X}_f^{ROA}(z_l(0))$ the follower converges to the leader-dependent rendezvous location exponentially, i.e. $x_f(t) \rightarrow z_l(t)$ as $t\rightarrow \infty$.\label{thm:ch6_convergence_one_follower}
\end{thm}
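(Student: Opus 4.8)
The plan is to prove that, on the region of attraction, the value function $V_N$ is a Lyapunov function for the closed loop, following the standard long‑horizon MPC (relaxed dynamic programming) template, with the stabilizing candidate controller supplied by Assumption~\ref{ass:ch6_follower_capability}. Write $\ell^*_k := \|\hat x_f^*(k|t) - \hat z_l(k|t)\|_{Q_f}^2$ for the optimal stage costs at time $t$, so that $\ell^*_0 = \|x_f - z_l\|_{Q_f}^2$ and $V_N(x_f,z_l) = \sum_k \ell^*_k$. Since $w=0$ and, in the one‑follower setting of this section, the follower's reference is the leader's realized future trajectory, we have $\hat x_f^*(1|t) = x_f^+$, $\hat z_l(1|t) = z_l^+$, and $\hat z_l(k|t+1) = \hat z_l(k+1|t)$; I will use these consistency relations freely.

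First I would establish the two‑sided bound. The lower bound $\|x_f - z_l\|_{Q_f}^2 \le V_N(x_f,z_l)$ is immediate: it is the $k=0$ summand of $V_N$, and the remaining summands are nonnegative. For the upper bound, consider the input sequence that applies $\kappa$ at every stage. By Assumption~\ref{ass:ch6_follower_capability} it is admissible for Problem~\ref{prob:ch6_docp_standard} (the state stays in $\mathcal X_f$, the input in $\mathcal U_f$, and $h_C \ge 0$ is maintained; with a single follower there are no inter‑agent constraints), and along it $\|\hat x_f(k|t) - \hat z_l(k|t)\|^2 \le \rho^k \|x_f - z_l\|^2$. Using $\lambda_{\min}(Q_f)\|v\|^2 \le \|v\|_{Q_f}^2 \le \lambda_{\max}(Q_f)\|v\|^2$ and summing the geometric series,
\[
V_N(x_f,z_l) \;\le\; \sum_{k\ge 0}\lambda_{\max}(Q_f)\rho^k\|x_f - z_l\|^2 \;\le\; \frac{\lambda_{\max}(Q_f)}{\lambda_{\min}(Q_f)(1-\rho)}\,\|x_f - z_l\|_{Q_f}^2 ,
\]
so one may take $\gamma := \lambda_{\max}(Q_f)/\big(\lambda_{\min}(Q_f)(1-\rho)\big) \ge 1/(1-\rho) > 1$, which is the constant of the statement and is independent of $N$. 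The same estimate gives $V_m \le \gamma\,\|\cdot\|_{Q_f}^2$ for every horizon $m$, a fact I will reuse.

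Next comes the decrease, which is the main obstacle. I would use the standard shifted candidate at $t+1$ — the shifted tail of the optimal input sequence from $t$, completed with a step of the candidate controller $\kappa$ (swapping the last optimal input for $\kappa$ so that the appended terminal state is controlled). Evaluating its cost, using the reference shift, and comparing with the optimality principle $V_j(x) = \|x-z\|_{Q_f}^2 + V_{j-1}(x^+)$ yields
\[
V_N(x_f^+,z_l^+) \;\le\; V_N(x_f,z_l) - \|x_f - z_l\|_{Q_f}^2 + \eta_N ,
\]
where $\eta_N$ is the cost of the appended $\kappa$‑step and, by Assumption~\ref{ass:ch6_follower_capability}, satisfies $\eta_N \le \tfrac{\lambda_{\max}(Q_f)}{\lambda_{\min}(Q_f)}\rho\,\ell^*_{N-1}$. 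Feasibility at $t+1$, hence forward invariance of $\mathcal X_f^{ROA}(z_l)$ once the decrease is in force, is guaranteed by the same candidate. The heart of the argument is to dominate $\eta_N$ by $(1-\alpha_N)\|x_f - z_l\|_{Q_f}^2$ for some $\alpha_N>0$ once $N$ is large: applying the uniform upper bound at each intermediate optimal state gives $\sum_{k\ge j}\ell^*_k = V_{N-j}(\hat x_f^*(j|t),\hat z_l(j|t)) \le \gamma\,\ell^*_j$, whence $\ell^*_{j+1} \le (\gamma-1)\,\ell^*_j$ and therefore $\ell^*_{N-1} \le (\gamma-1)^{N-1}\|x_f - z_l\|_{Q_f}^2$; when $\gamma < 2$ this already makes $\eta_N$ exponentially small in $N$, and the general case follows from the standard controllability‑based relaxed‑dynamic‑programming estimate, which produces an explicit $\alpha_N = \alpha_N(\gamma,N)$ that is positive for all $N > N_0$ and tends to $1$ as $N \to \infty$. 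In this step the horizon length and the uniform bound jointly tame the terminal correction $\eta_N$, which is not small on its own; everything else is bookkeeping together with the feasibility and shift arguments above.

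Finally, from $\|x_f - z_l\|_{Q_f}^2 \le V_N \le \gamma\|x_f - z_l\|_{Q_f}^2$ and the decrease, substituting $\|x_f - z_l\|_{Q_f}^2 \ge V_N/\gamma$ gives $V_N(x_f^+,z_l^+) \le (1-\alpha_N/\gamma)\,V_N(x_f,z_l)$ with $1-\alpha_N/\gamma \in (0,1)$; iterating over $t$ and using $\|x_f(t) - z_l(t)\|_{Q_f}^2 \le V_N(x_f(t),z_l(t))$ shows that $x_f(t) \to z_l(t)$ at a geometric rate, which is the asserted exponential convergence to the leader‑dependent rendezvous location. That such a $\kappa$ exists in the first place is precisely the incremental‑stabilizability content noted after Assumption~\ref{ass:ch6_follower_capability} (cf.\ \cite{kohler2018nonlinear,tran2016incremental,angeli2002lyapunov}).
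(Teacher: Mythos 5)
Your proposal follows the same overall template as the paper's proof: lower bound from the $k=0$ stage cost, a shifted candidate input completed by one step of the controller $\kappa$ from Assumption~\ref{ass:ch6_follower_capability}, a bound on the appended terminal correction that shrinks with $N$, and the conclusion that $V_N$ is an exponentially decaying Lyapunov function. The one genuinely different (and arguably better) ingredient is your upper bound: you roll out $\kappa$ over the whole horizon and sum the geometric series to get the $N$-independent, state-independent constant $\gamma = \lambda_{\max}(Q_f)/\bigl(\lambda_{\min}(Q_f)(1-\rho)\bigr)$, whereas the paper defines $\gamma$ through $V_{N,\max}$ and the region of attraction (so that $\gamma\|x_f - z_l\|_{Q_f}^2 \geq V_{N,\max} \geq V_N$ there). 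Your version makes the local cost-controllability property explicit and uniform over horizon lengths, which is exactly what the terminal-ingredient-free analysis needs, and it is what lets you write $V_{N-j}(\hat{x}_f^*(j|t),\hat{z}_l(j|t)) \leq \gamma\,\ell_j^*$ at intermediate states.

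There is, however, a gap in your decrease argument. Your explicit estimate $\ell_{j+1}^* \leq (\gamma-1)\,\ell_j^*$ only yields decay of $\ell_{N-1}^*$ when $\gamma < 2$, and your own $\gamma = \lambda_{\max}(Q_f)/\bigl(\lambda_{\min}(Q_f)(1-\rho)\bigr)$ will typically exceed $2$; for the general case you defer to ``the standard controllability-based relaxed-dynamic-programming estimate'' without carrying it out, so the existence of $N_0$ and of $\alpha_N>0$ --- the crux of the theorem --- is cited rather than proved. The paper closes this step with an averaging argument: since $V_N$ is a sum of $N$ nonnegative tail stage costs bounded by $\gamma\,\ell_0^*$, some index $k^*\geq 1$ satisfies $\ell_{k^*}^* \leq V_N/N$ (the paper asserts this for \emph{all} $k\geq 1$, which is itself loose, but it is only needed for one index). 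Combining this with your uniform bound at the intermediate state, $\ell_N^* \leq V_{N-k^*}(\hat{x}_f^*(k^*|t),\hat{z}_l(k^*|t)) \leq \gamma\,\ell_{k^*}^* \leq \gamma V_N/N \leq (\gamma^2/N)\,\ell_0^*$, gives $\eta_N \leq \rho\,\tfrac{\lambda_{\max}(Q_f)}{\lambda_{\min}(Q_f)}\tfrac{\gamma^2}{N}\,\ell_0^*$ and hence an explicit $\alpha_N = 1 - \rho\,\tfrac{\lambda_{\max}(Q_f)}{\lambda_{\min}(Q_f)}\tfrac{\gamma^2}{N} > 0$ for $N$ beyond an explicit $N_0$. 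You have all the ingredients for this two-line argument already on the page; you should assemble it rather than appeal to the literature, since it is the step the theorem is actually about. The remainder of your proof (feasibility of the shifted candidate, the Bellman telescoping, and the passage from the decrease to geometric convergence of $\|x_f(t)-z_l(t)\|$) is correct and matches the paper.
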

The proof is given in Appendix~\ref{sec:ch6_app_thm1}.

\begin{rem}
Note that the optimization problem in Prob.~\ref{prob:ch6_docp_standard} does not use the terminal ingredients and thus they are not used in Theorem~\ref{thm:ch6_convergence_one_follower}, although it is a common way to prove the stability of MPC scheme \cite{mayne2000constrained,chen1998quasi}. 
In this work, we avoid usage of the terminal ingredients by considering a sufficiently long planning horizon $N$ in the region of attraction $\mathcal{X}^{ROA}$. 
This builds upon the methodology suggested in \cite{boccia2014stability,kohler2018nonlinear}.
Moreover, the initial feasibility in the region of attraction is implicitly assumed in Theorem~\ref{thm:ch6_convergence_one_follower} by the same principle, i.e. by assuming that there exists $N_0$ for which the optimization problem in Prob.~\ref{prob:ch6_docp_standard} is feasible for all initial conditions in $\mathcal{X}^{ROA}$. Also note that the disturbances are not considered in the theorem.
\end{rem}
It is straightforward to show, using the same argument as in Theorem~\ref{thm:ch6_convergence_one_follower}, that the leader agent will follow its reference trajectory as well, thus achieving the objective of Problem~\ref{prob:formulation}.

\subsection{Robustness to Communication Losses}
In case of communication loss, the latest available shared trajectory of the leader used as the follower reference is shifted and the missing part of the trajectory is predicted with the EKF predictor. If the latest time of arrival $t_a$ of the shared trajectory is $t_a < t-k$, $0 < k < N$, then the trajectory is shifted for $k$ time steps and the rest is predicted as follows 
\begin{equation}
	\hat{z}_{i}(l|t) = \begin{cases}
	\hat{z}_{i}(l+k|t_{a,i}), &\text{for } l<N-k \\
	\hat{z}_{p,i}(l+k|t_{a,i}), &\text{for } l\geq N-k
	\end{cases}\label{eq:ch6_shift_and_predict}
\end{equation} 
for $l=0,1,...,N$. 
However, the predicted trajectory has some uncertainty such that $z_{i}(l|t) \in \hat{z}_{p,i}(l|t_i) \oplus P_i(l|t)$ where the set $P_i(l|t):=\{ b :  \| b \|_{P_{p,i}(l|t)}^2 \leq s \}$ is determined by the prediction covariance matrix $P_{p,i}(l|t)$ and a parameter $s=-2\ln(1-p)$, that depends on the chosen probability confidence $p\in(0,1)$. 
In case $k = N-1$ which means that no predicted future steps are available, the data collection is updated only with the EKF Predictor. 
\begin{rem}

Note that the next state uncertainty can be estimated without using EKF Predictor as $z_{i}(l+1|t) \in \hat{z}_{i}(l|t_i) \oplus \mathcal{B}_{r_i}$ where the choice of the safety radius $r_i$ determines the conservativeness of the used set estimates. By setting $r_i$  to
\begin{equation}
r_i = \max_{x_i \in \mathcal{X}_i,u_i \in \mathcal{U}_i} \| x_i^+ - f(x_i,u_i) \| \label{eq:ch6_worst_case_radius}
\end{equation}
one can guarantee that the next state is within the ball of the given radius. However, this is an overly conservative approach given that the uncertainty sets grow and can become very large at the end of the horizon thus preventing the follower agents to land. In that case, the collision checking can be restricted only for the one-step ahead prediction, i.e.
\begin{equation*}
\left \| C(\hat{x}_i(1|t) - \hat{x}_j(1|t)) \right \| \geq R + r_j, \quad \text{for } j\in\mathcal{N}_{-i}
\end{equation*} 
and use the worst case radius $r_j$ as in \eqref{eq:ch6_worst_case_radius}.
\end{rem}
\begin{thm}[$p$-probabilistically safe landing]\label{thm:probabilistic_landing}
    Let the conditions of Theorem~\ref{thm:ch6_convergence_one_follower} hold. Given the probabilistic confidence $p\in(0,1)$, the radius of the landing platform $r$, and the radius necessary for the safe landing $r_{safe}$, if the following condition holds
    \begin{equation*}
        r_{safe} + \sqrt{s\lambda_{\max}(P_{p,i}(t))} < r
    \end{equation*}
    where $s=-2\ln(1-p)$, then the landing is considered as probabilistically safe with probability $p$.
\end{thm}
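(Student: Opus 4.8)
The plan is to chain together three facts: (i) the elementary geometric condition for a landing footprint of radius $r_{safe}$ to fit inside the platform disc of radius $r$; (ii) the exponential convergence of the follower to its MPC reference from Theorem~\ref{thm:ch6_convergence_one_follower}; and (iii) a confidence bound saying the EKF-predicted reference is close to the true leader position with probability $p$. First I would make the notion of ``probabilistically safe landing'' precise: the landing is safe exactly when the disc of radius $r_{safe}$ centred at the follower's terminal position $x_f$ is contained in the platform disc of radius $r$ centred at the true platform centre $z_l$, i.e.\ when $\| x_f - z_l \| \le r - r_{safe}$. Hence it suffices to show this inequality holds with probability at least $p$.

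Next I would analyse the prediction uncertainty set $P_i(l|t) = \{ b : \|b\|_{P_{p,i}(l|t)}^2 \le s \}$. Because the EKF supplies a Gaussian estimate of the planar position error with covariance $P_{p,i}$, the event $\{ z_i(l|t) \in \hat z_{p,i}(l|t) \oplus P_i(l|t)\}$ coincides with the event that a quadratic form distributed as $\chi^2$ with two degrees of freedom is at most $s$, whose probability equals $1 - e^{-s/2}$; setting this equal to $p$ gives precisely $s = -2\ln(1-p)$, which is where the stated constant comes from. Diagonalising $P_{p,i}$, every $b$ in this ellipsoid obeys $\| b \| \le \sqrt{s\,\lambda_{\max}(P_{p,i})}$, so on an event $E$ of probability $p$ the true leader position differs from the reference actually fed to the follower's MPC by at most $\sqrt{s\,\lambda_{\max}(P_{p,i}(t))}$.

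Then I would invoke Theorem~\ref{thm:ch6_convergence_one_follower}: the follower's MPC drives $x_f(t)$ exponentially to whatever (admissible) reference trajectory it is given, here the shifted-and-predicted trajectory \eqref{eq:ch6_shift_and_predict}; hence $\| x_f(t) - \hat z_{p,l}(t) \| \to 0$. A triangle inequality on the event $E$ then yields, asymptotically,
\[
\| x_f(t) - z_l(t) \| \le \| x_f(t) - \hat z_{p,l}(t) \| + \| \hat z_{p,l}(t) - z_l(t) \| \;\longrightarrow\; \sqrt{s\,\lambda_{\max}(P_{p,i}(t))},
\]
which, under the hypothesis $r_{safe} + \sqrt{s\,\lambda_{\max}(P_{p,i}(t))} < r$, is strictly less than $r - r_{safe}$. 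Thus on the probability-$p$ event $E$ the footprint lies inside the platform, which is the claim.

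The step I expect to be the main obstacle is the bridge between the stochastic EKF description and the deterministic guarantees of Theorem~\ref{thm:ch6_convergence_one_follower}: one has to justify that conditioning on the confidence event $E$ preserves the contraction of Assumption~\ref{ass:ch6_follower_capability} (it does, since that assumption is stated for \emph{every} admissible leader trajectory, hence in particular for the realised one), that the open-loop EKF rollout is itself an admissible reference so that Theorem~\ref{thm:ch6_convergence_one_follower} applies to it, and that the covariance entering the bound is the one at the prediction step actually in force at landing rather than a worst-case end-of-horizon covariance --- which is exactly why the landing and collision checks are anchored at the current (or one-step-ahead) prediction. Upgrading the asymptotic statement to a non-asymptotic one (an explicit time after which the footprint containment holds) would additionally require tracking the exponential rate $\rho$ and the horizon-dependent constant $\alpha_N$ from Theorem~\ref{thm:ch6_convergence_one_follower}.
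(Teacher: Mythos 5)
Your proposal is correct and follows essentially the same route as the paper's proof: bound the discrepancy between the true platform centre and its estimate by the worst-case semi-axis $\sqrt{s\,\lambda_{\max}(P_{p,i}(t))}$ of the confidence ellipsoid (an event of probability $p$), combine this with the MPC convergence of Theorem~\ref{thm:ch6_convergence_one_follower}, and conclude footprint containment from $r_{safe}+\sqrt{s\,\lambda_{\max}(P_{p,i}(t))}<r$. The paper's own proof is a terse two-sentence version of this argument; your explicit $\chi^2_2$ derivation of $s=-2\ln(1-p)$ and the triangle-inequality step merely spell out what the paper leaves implicit.
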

\begin{proof}
    The proof is based on the worst-case estimate of the landing position. Given the probabilistic confidence $p$ and the covariance matrix $P_{p,i}(t)$, the worst-case distance from the actual landing position $z_l(t)$ and its estimate $\hat{z}_l(t)$ is $d=\sqrt{s\lambda_{\max}(P_{p,i}(t))}$. Thus, $\|z_l(t) - \hat{z}_l(t)\| = d < r-r_{safe}$ which means that the follower applying the control input obtained with Prob.~\ref{prob:ch6_docp_standard} is guaranteed to land inside of the landing platform with probability $p$. 
\end{proof}

The extension to the multiple-follower case is given by the following result:
\begin{cor}\label{thm:rendezvous_algorithm}
    Let Assumption~\ref{ass:ch6_follower_capability} and~\ref{ass:initial_terminal_conditions} hold. Moreover, let the conditions of Theorem~\ref{thm:probabilistic_landing} hold for all follower agents $i\in\mathcal{M}$. Then, Algorithm~\ref{alg:ch6_MF_rendezvous} converges and all follower agents meet on the leader landing platform without a collision.
\end{cor}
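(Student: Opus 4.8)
The plan is to reduce the multiple-follower claim to a per-agent application of Theorem~\ref{thm:ch6_convergence_one_follower} together with a recursive-feasibility argument for the coupled problem. First I would note that the reference of follower $i$ is the shifted leader position $\hat{z}_l(\cdot|t)$ offset by $c_i$, so after the coordinate change $\tilde{x}_i := x_i - [c_i^T,0_{n_i-3}^T]^T$ the tracking-error dynamics and the contraction property of Assumption~\ref{ass:ch6_follower_capability} are unchanged; hence Theorem~\ref{thm:ch6_convergence_one_follower} applies verbatim to each follower in isolation, yielding a value function $V_N(x_i,\hat{z}_l + c_i)$ that is a Lyapunov function decreasing by at least $\alpha_N\|x_i - \hat{z}_l - c_i\|_{Q_i}^2$ at every step, \emph{provided} the optimization problem~\eqref{eq:ch6_mpc_problem} for agent $i$ stays feasible.

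Second, I would establish recursive feasibility of the coupled problem. By Assumption~\ref{ass:initial_terminal_conditions} the initial configuration is collision-free, and by the remark following Algorithm~\ref{alg:ch6_MF_rendezvous} the first iteration is executed sequentially, so each agent's first optimal trajectory $\hat{x}_i^*(\cdot|t_0)$ is collision-free with respect to the already-committed trajectories. For the inductive step I would use the standard shifted-trajectory candidate $\hat{u}_i(k|t+1) = \hat{u}_i^*(k+1|t)$ for $k<N-1$, appended with the tracking input $\kappa(\hat{x}_i(N-1|t+1),\hat{z}_l(N-1|t+1))$ from Assumption~\ref{ass:ch6_follower_capability}; this candidate respects the dynamics~\eqref{eq:ch6_p_1}, the state/input constraints~\eqref{eq:ch6_p_2}--\eqref{eq:ch6_p_3} and the spatiotemporal constraint $h_C\geq 0$. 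For the inter-agent constraint~\eqref{eq:ch6_coll_avoid} I would exploit that every agent plans against the most recent broadcast, or — under communication loss — the shifted-and-predicted trajectory~\eqref{eq:ch6_shift_and_predict} inflated by the uncertainty set $P_j(\cdot|t)$, inside which the true trajectory of agent $j$ lies with confidence $p$; the terminal piece of the candidate lands near $c_i$, and since $\|c_i-c_j\|>R$ by Assumption~\ref{ass:initial_terminal_conditions}, the inflated landing footprints around the $c_i$ remain mutually disjoint exactly when the condition of Theorem~\ref{thm:probabilistic_landing} holds for every $i\in\mathcal{M}$. Thus feasibility is preserved, with probability at least $p$ per agent and, by a union bound, at least $1-M(1-p)$ jointly.

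Third, I would close the argument: feasibility at every step gives, via the first step, exponential decay of each error $\|x_i(t)-\hat{z}_l(t)-c_i\|$, so the stopping test $\|\hat{x}_i(t)-\hat{z}_l(t)+c_i\|>\varepsilon$ in Algorithm~\ref{alg:ch6_MF_rendezvous} is violated after finitely many steps and every agent terminates within $\varepsilon$ of its landing slot $c_i$ on the platform — i.e. the algorithm converges. Because the constraints~\eqref{eq:ch6_coll_avoid} and $h_C\geq 0$ hold along the entire closed-loop trajectory and the slots satisfy $\|c_i-c_j\|>R$, all followers meet on the platform without collision, in the probabilistic sense inherited from Theorem~\ref{thm:probabilistic_landing}.

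The hard part — and the most delicate point of the proposal — is the recursive-feasibility step under \emph{asynchronous} and possibly \emph{lossy} communication in the \emph{parallel} execution of Algorithm~\ref{alg:ch6_MF_rendezvous}: one must rule out two followers each reacting to the other's outdated committed trajectory and simultaneously revising their plans into conflict, and one must ensure the EKF-predicted tubes of~\eqref{eq:ch6_shift_and_predict} do not grow past the separation margin $\|c_i-c_j\|-R$ before convergence. I expect this to require either the conservative one-step-ahead collision check of the preceding remark or an explicit standing assumption bounding the inflated tubes by that margin; accordingly, the ``without a collision'' conclusion should be read as holding with probability $p$ (or $1-M(1-p)$) rather than deterministically.
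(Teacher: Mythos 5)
Your proposal follows the same two-part decomposition as the paper's proof --- (i) convergence of each follower to its slot on the platform, (ii) collision-freedom of the closed-loop trajectories --- but it is substantially more detailed than what the paper actually writes. The paper's own proof is an assertion-level sketch: it states that under the hypotheses a feasible rendezvous configuration exists, that Theorem~\ref{thm:probabilistic_landing} gives a landing trajectory robust to communication losses, and that collision avoidance holds ``because each agent is solving the optimization problem in Problem~\ref{prob:ch6_docp_standard}''; it never invokes Theorem~\ref{thm:ch6_convergence_one_follower} explicitly, never constructs a shifted candidate, and never discusses recursive feasibility. Your reduction via the coordinate change $\tilde{x}_i = x_i - [c_i^T,0_{n_i-3}^T]^T$ to apply Theorem~\ref{thm:ch6_convergence_one_follower} per agent, and your inductive feasibility argument with the shifted-and-appended input candidate, supply exactly the machinery the paper leaves implicit (note only that Assumption~\ref{ass:ch6_follower_capability} is stated for the reference $z_l$ itself, so its applicability to the offset reference $z_l + c_i$ is an additional, if mild, requirement). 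The delicate point you flag --- that in parallel, asynchronous, lossy execution two followers may each re-plan against the other's outdated committed trajectory and thereby break the inductive feasibility step, and that the EKF tubes must not outgrow the margin $\|c_i - c_j\| - R$ --- is a genuine gap, but it is a gap in the paper's argument as much as in yours: the paper only addresses it informally via the remark mandating a sequential first iteration and the remark on the conservative one-step-ahead check with radius \eqref{eq:ch6_worst_case_radius}, neither of which is invoked in the proof of the corollary. Likewise, your observation that ``without a collision'' can only be guaranteed with probability $p$ per agent (or jointly $1 - M(1-p)$ by a union bound) is a fair sharpening of the statement, which the paper leaves unqualified. In short: same skeleton, but you have written the proof the paper gestures at, and correctly identified where it remains incomplete.
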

\begin{proof}
    The proof is based on two parts. First, that the all agents converge to the landing platform and second, that their trajectories are collision-free.  
    Given that all stated conditions hold there exists a feasible configuration for all follower agents to rendezvous on the leader landing platform. Moreover, all follower agents also satisfy the conditions from Theorem~\ref{thm:probabilistic_landing}, and thus there exist a feasible landing trajectory robust to communication losses.
    Because each agent is solving the optimization problem in Problem~\ref{prob:ch6_docp_standard}, the inter-agent collision avoidance is enforced in every feasible landing trajectory. Therefore, all agents rendezvous on the leader landing platform without a collision. 
\end{proof}

\section{SIMULATION RESULTS}\label{sec:sim_results}
In this section we present a landing scenario with $M=6$ agents. The leader is unable to communicate with the follower agents and thus follower agents must use EKF Predictor to estimate the position of the leader. Moreover, the leader measurements are taken from real-world experiments and thus have disturbances. The follower agents share their predicted trajectories until time step $k=10$ (continuous time $t_c = 2 s$) when one of the agents (Agent $f_1$) also loses the communication with the rest of the agents. Then the rest of the agents in the scenario must also predict the future trajectory of Agent $f_1$ and Agent $f_1$ predicts the trajectories of all other agents in the scenario.

The state and input constraints on the models are defined similarly to \cite[Sec.5.1]{lapandic2021aperiodic}. The initial positions of follower agents are $x_i(0) = [5\cos(2i\pi/M),5\sin(2i\pi/M),10, 0_6^T]^T$, and the leader is at origin $x_l(0)=0_6$. The radius of the whole landing platform for all agents is $5r_{safe}$, and the safe radius for landing is $r_{safe}=0.5m$. The matrices $Q_{f_i}=\text{diag}(10,10,5,1,1,1,1,1,1)$, $Q_{l}=\text{diag}(10,10,10,1,1,1)$, thus primary penalizing the position in $x$ and $y$ and then in $z$ for quadrotors, and orientation $\psi$ for the boat. $\lambda_{\max}(Q_{f_i})=10$, and we pick $V_{N,max}=240$ such that $\bar{\gamma}=1.99$ and $N=20$, and the region of attraction is large enough and encompass the initial displacement with a sufficient margin.

The landing locations are equidistantly distributed as on Fig.~\ref{fig:landing_platform} and assigned to an agent positioned at the opposite side diagonally. Assumption~\ref{ass:initial_terminal_conditions} is satisfied with $R=2r_{safe}$, and $C=I_3$. 
The landing is considered safe if the conditions of Theorem~\ref{thm:probabilistic_landing} hold with $p=0.95$. Assuming that all agents will have identical estimation of the landing platform position and inter-agent collisions are handled it is sufficient to consider the safety with respect to the outer boundary. Thus, with $p=0.95$, $\lambda_{max}(P_{p,l}(t)) < (5r_{safe}-3r_{safe})^2/s \approx 0.17$. From the experimental results, $\lambda_{max}(P_{p,l}(t))<0.06$ thus satisfying Theorem~\ref{thm:probabilistic_landing}. Algorithm~\ref{alg:ch6_MF_rendezvous} and Prob.~\ref{prob:ch6_docp_standard} are implemented with CasADi \cite{Andersson2019casadi} and results are shown on Figures~\ref{fig:big_fig} and \ref{fig:top_view}. All trajectories are collision-free, and the predicted trajectory of Agent $f_1$ by other agents and vice versa do not induce much conservativeness to Algorithm. This is mainly because the first part of the trajectory is generated using the shift mechanism as in \eqref{eq:ch6_shift_and_predict} and the small eigenvalues of the covariance matrices compared to the considered safety radii. 

\begin{figure}
   \centering
        \includegraphics[width=0.4\linewidth]{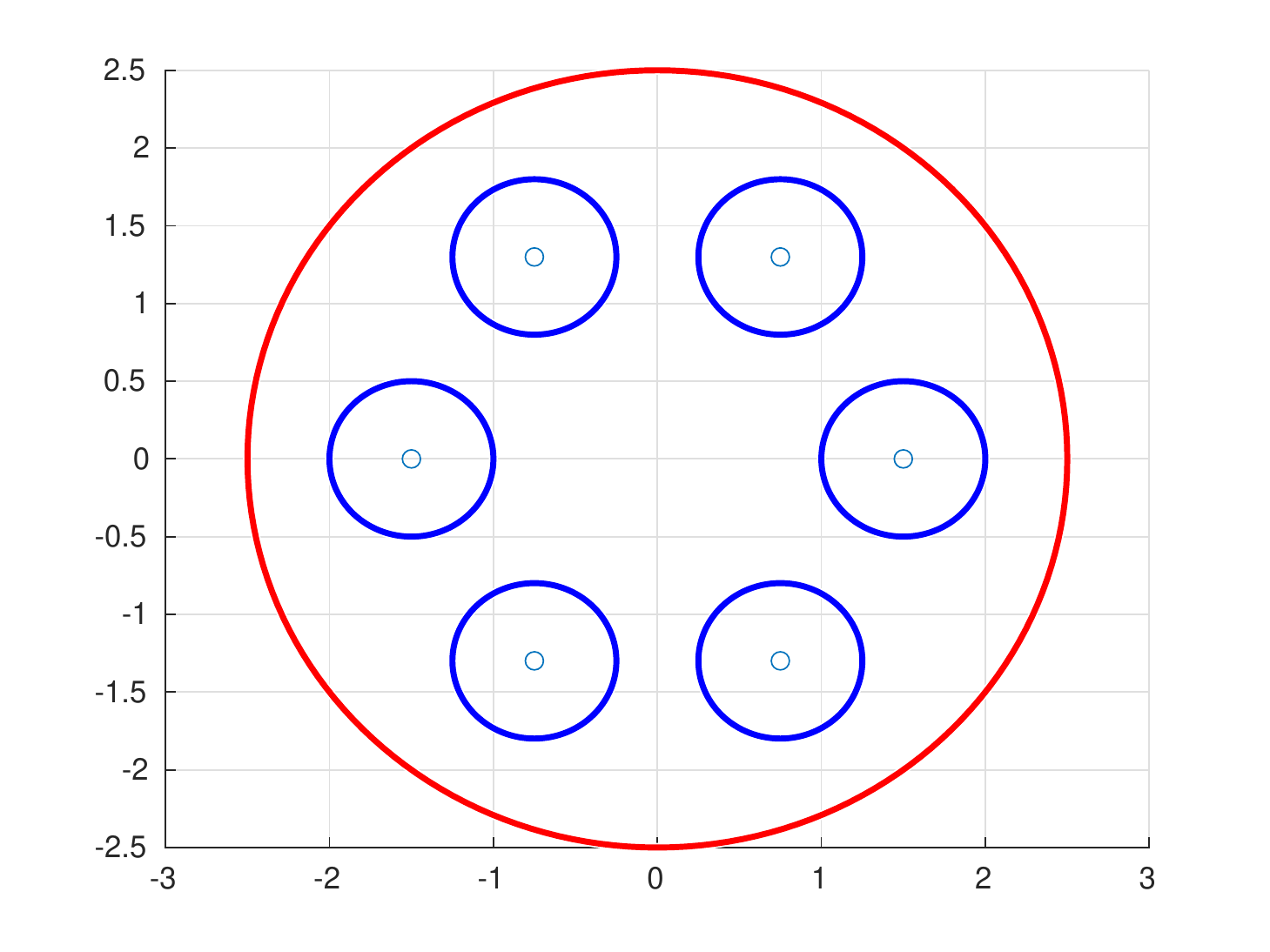} 
        \caption{The red circle marks the boundary of the landing platform, while the blue circles are the landing locations with safety radius $r_{safe}=0.5m$. The distance from the center of the platform to a landing location is $3r_{safe}.$}\label{fig:landing_platform}
    \vspace{-0.6cm}
\end{figure}

\begin{figure*}[t]
    \centering
    {\includegraphics[width=0.3\textwidth]{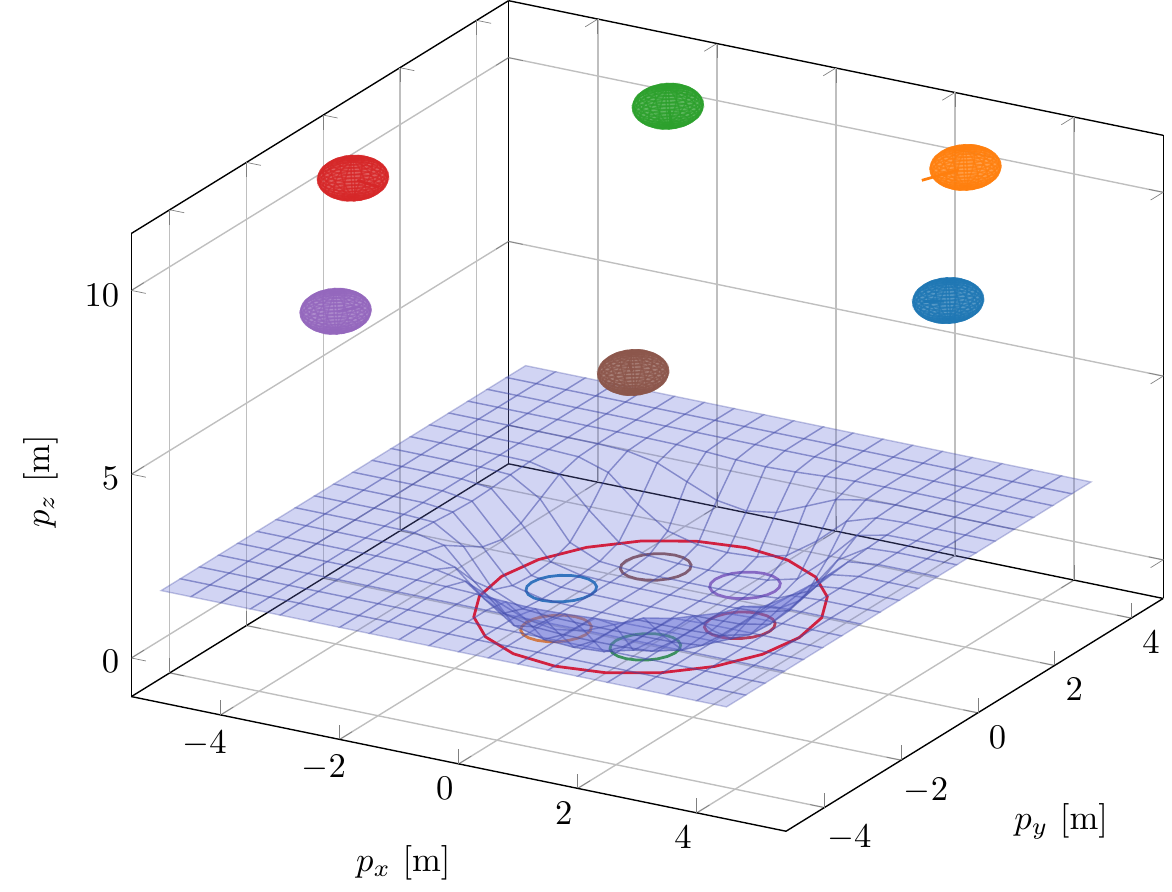}}\hfill
    {\includegraphics[width=0.3\textwidth]{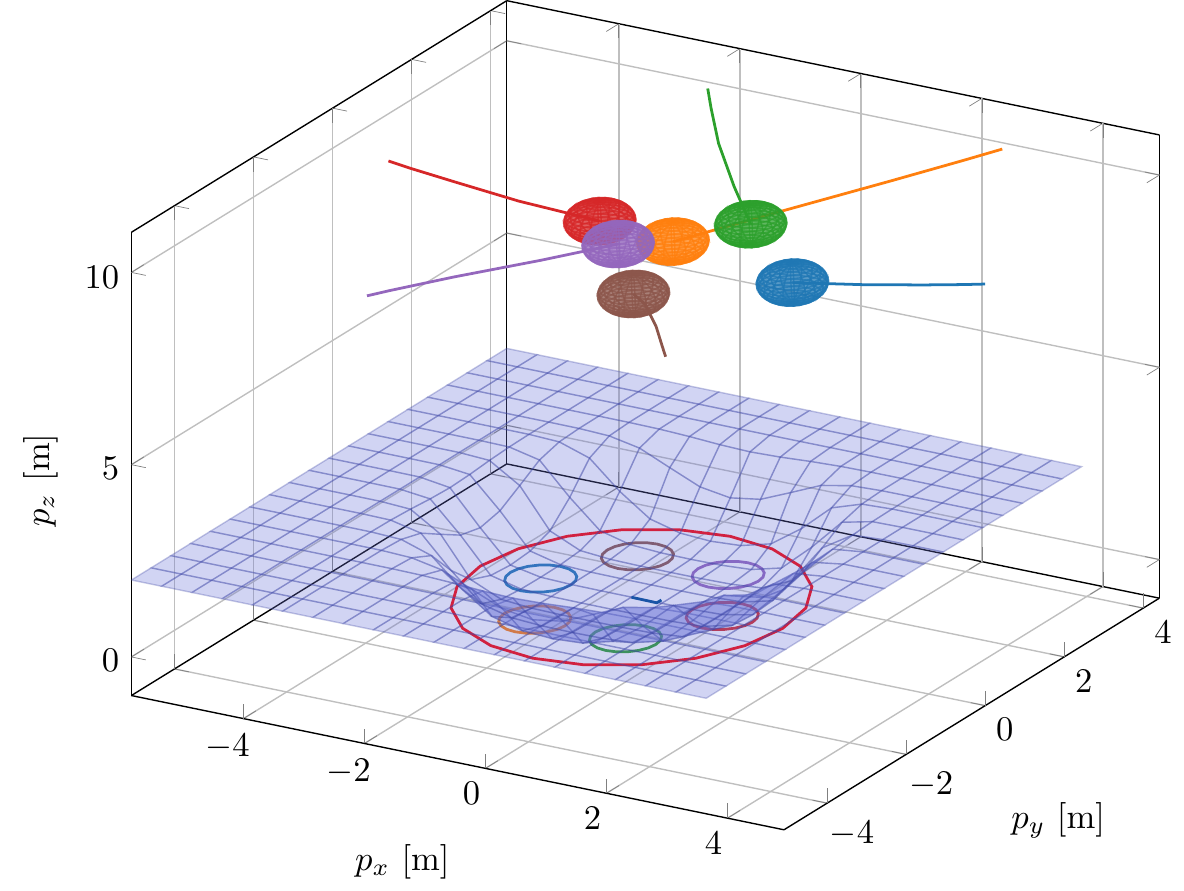}}\hfill
    {\includegraphics[width=0.3\textwidth]{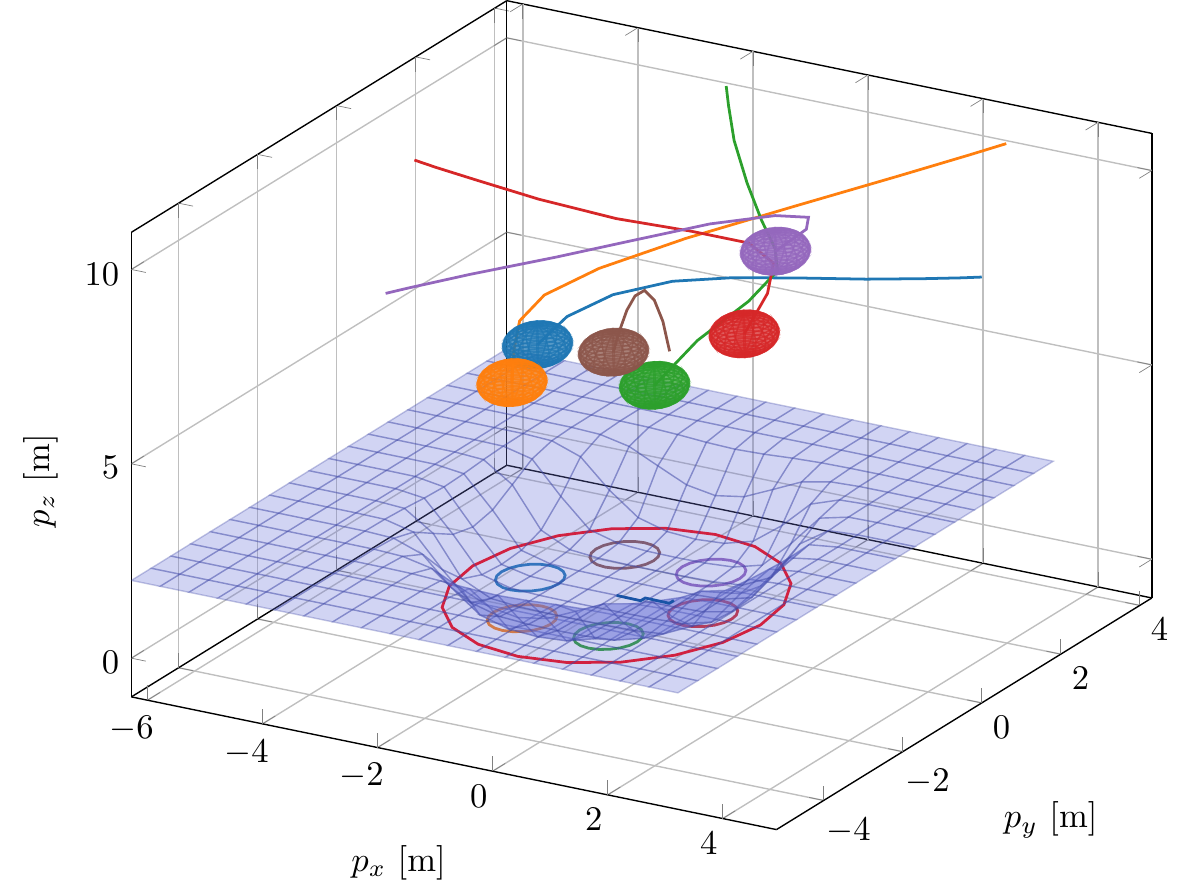}}\hfill
    {\includegraphics[width=0.3\textwidth]{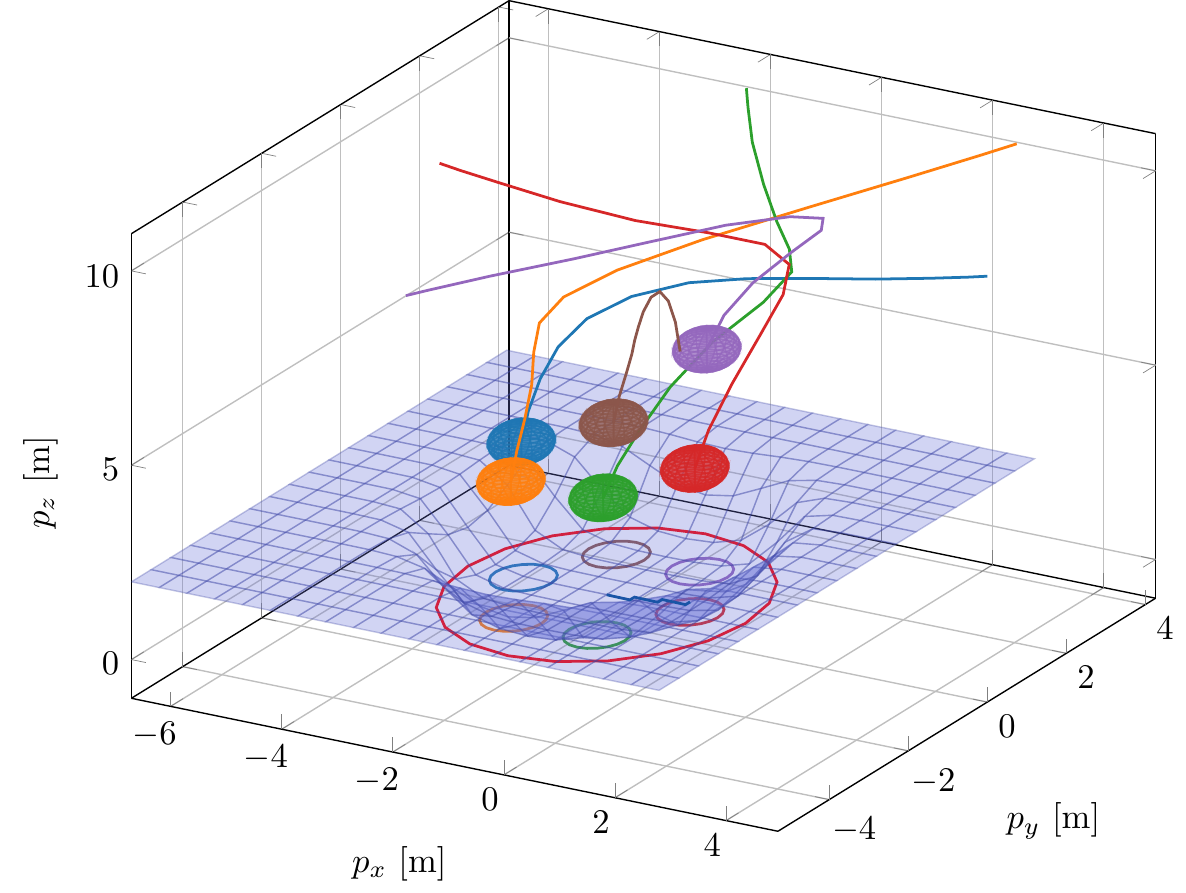}}\hfill
    {\includegraphics[width=0.3\textwidth]{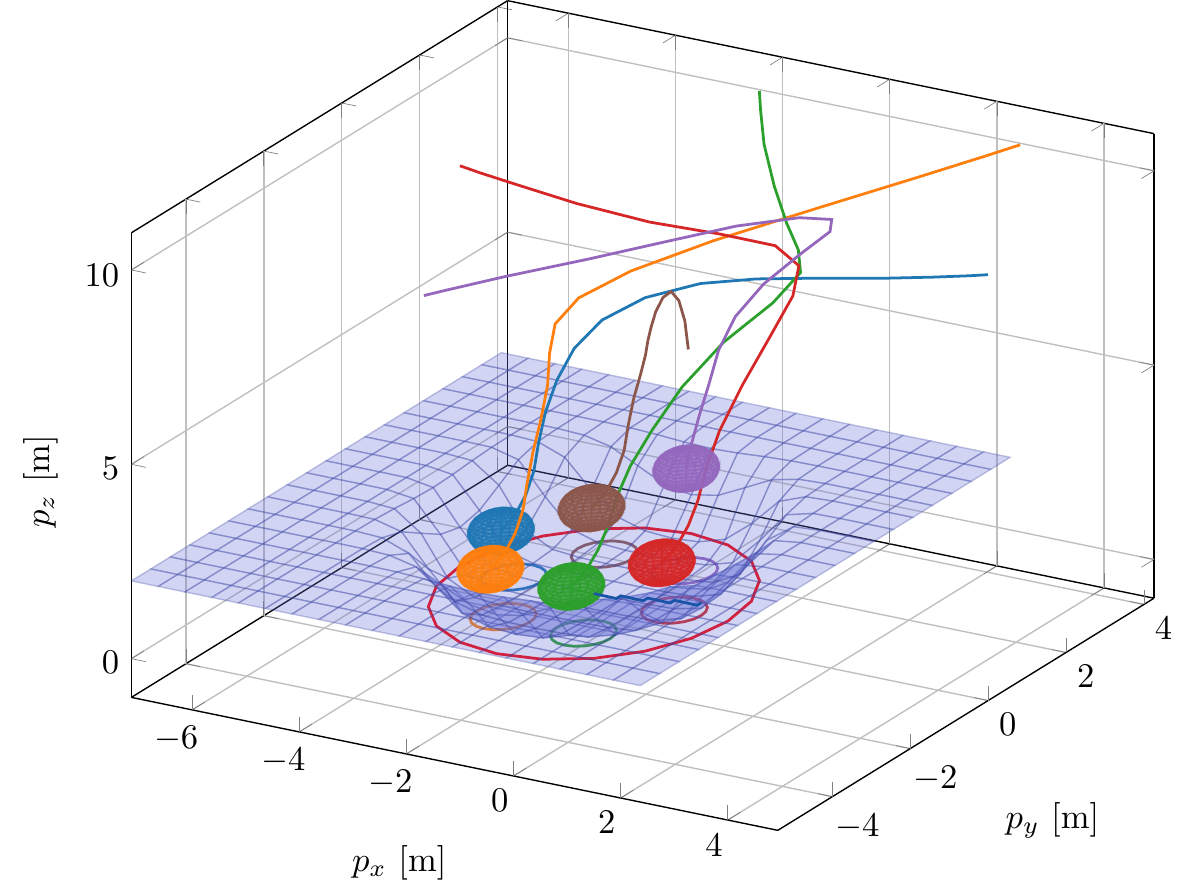}}\hfill
    {\includegraphics[width=0.3\textwidth]{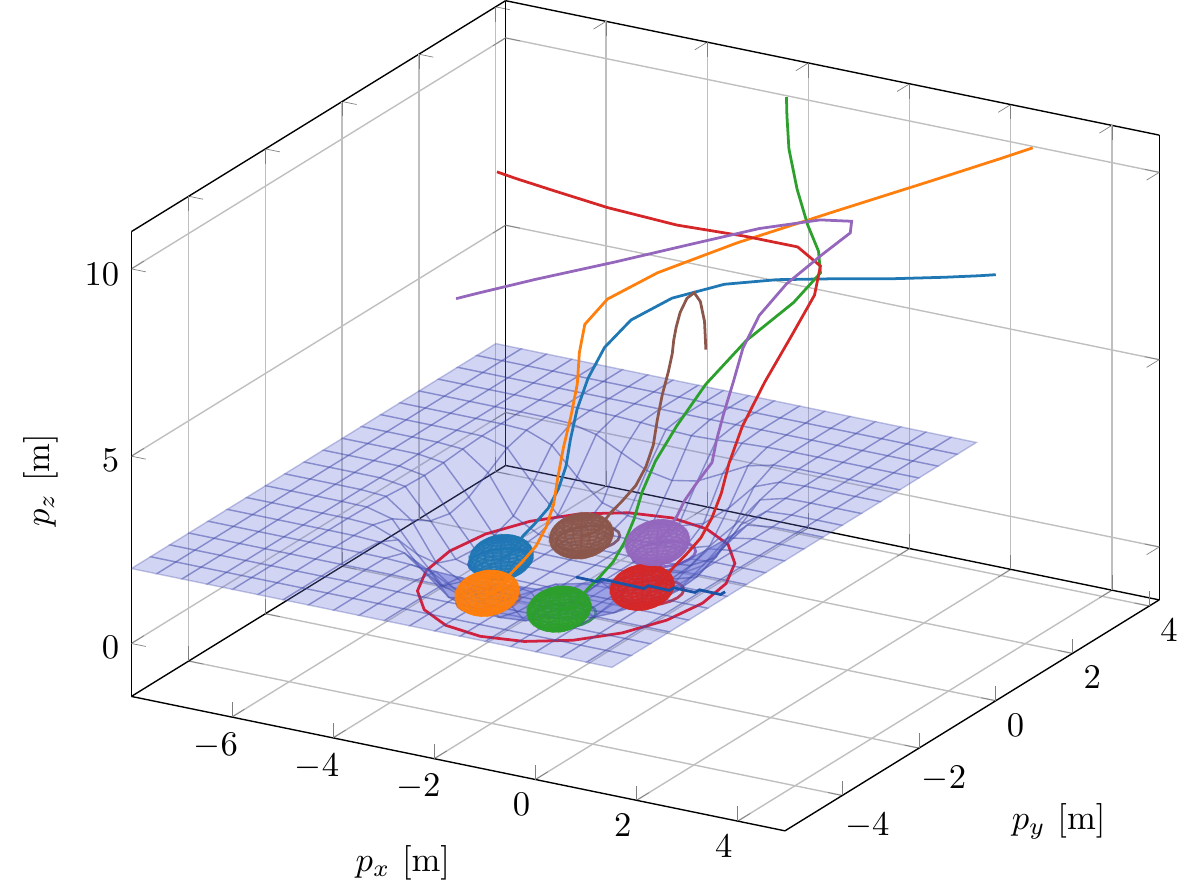}}
    \caption{3D view of a landing scenario at discrete time steps $t=\{0,5,10,13,17,23 \}$ with $\Delta t=0.2s$ between two time steps. The followers are unable to communicate with the leader and a loss of communication with Agent $f_1$ occurs during the experiment at $t=10$. Agent $f_1$ is marked in orange.}\label{fig:big_fig}
\end{figure*}
\begin{figure}
        \centering
        {\includegraphics[width=0.47\linewidth]{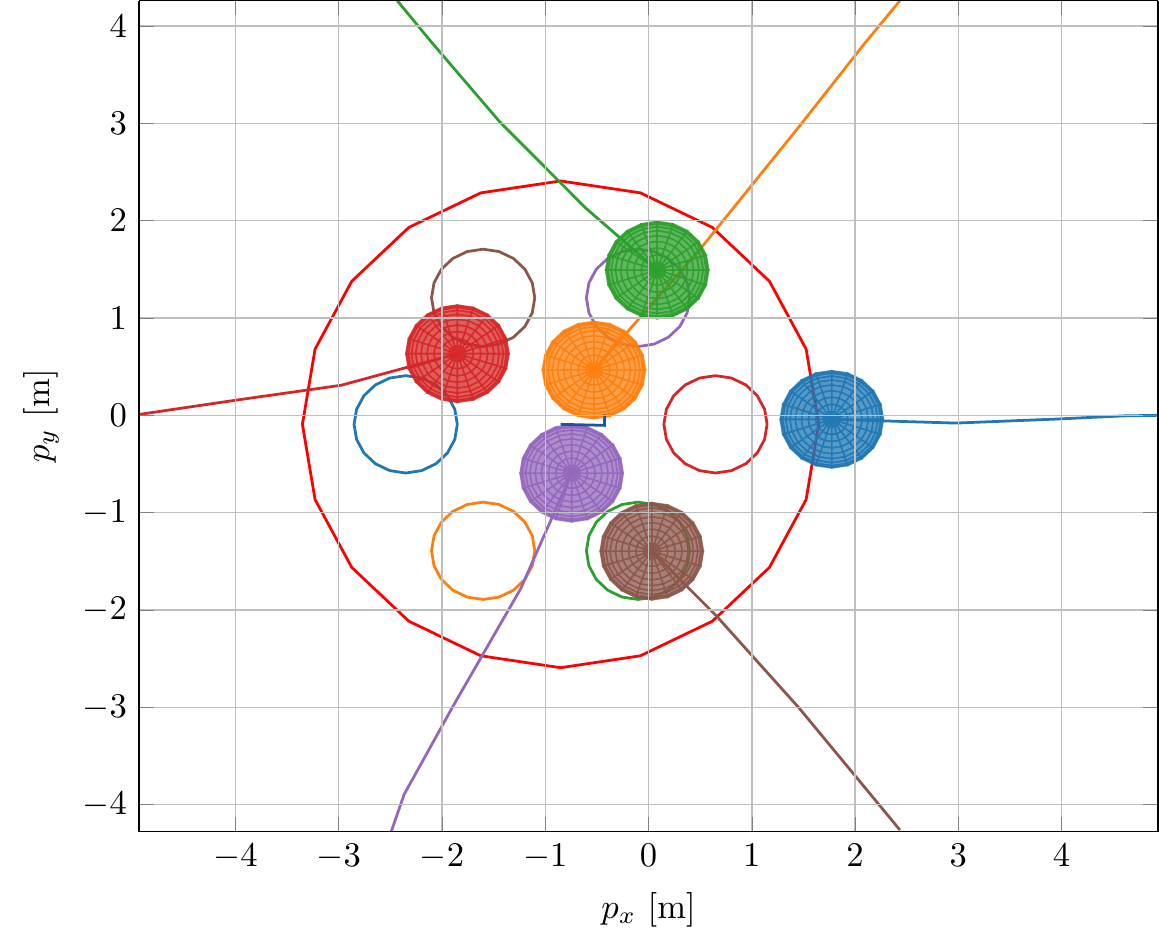}} \hfill 
        {
        \includegraphics[width=0.47\linewidth]{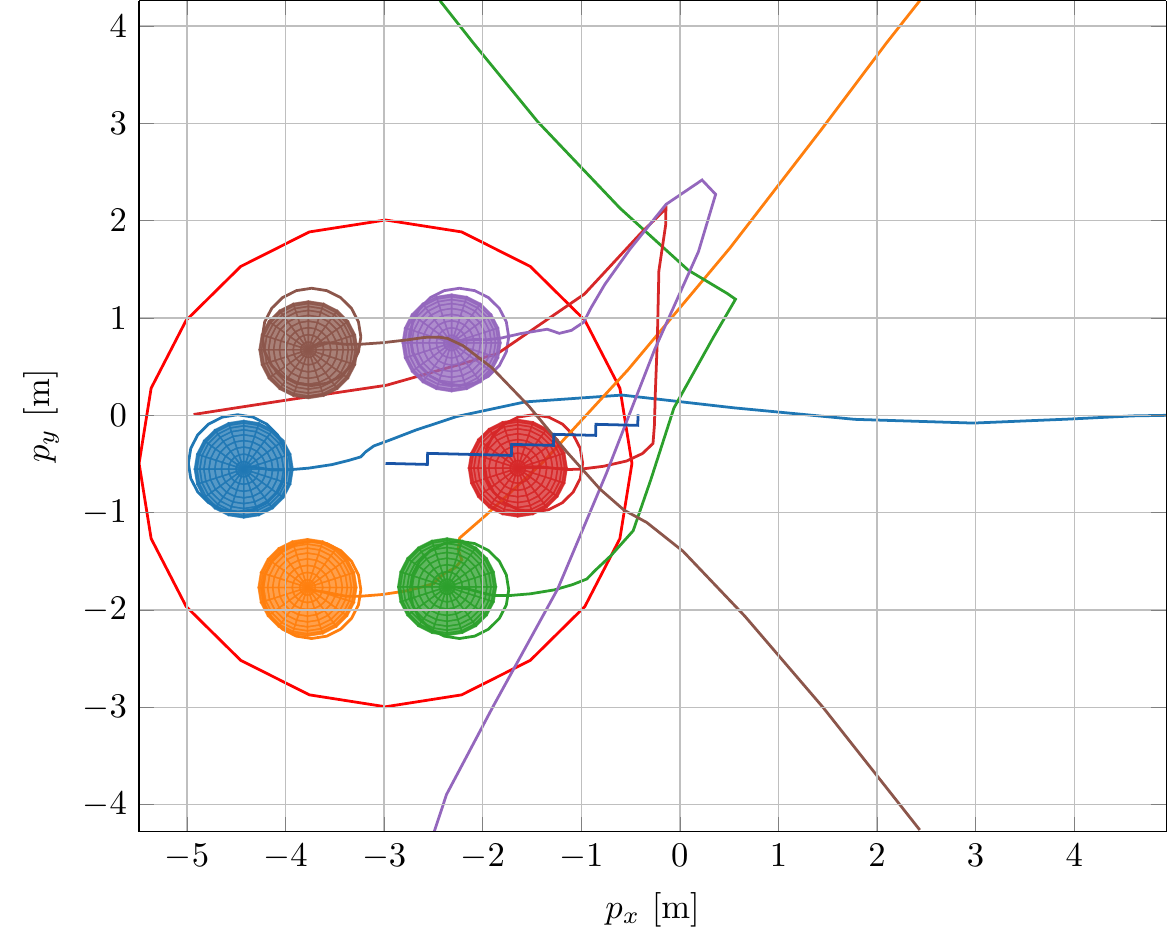}
        }
        \caption{Left: Top view of the second situation at $t=5$, when it might look from the perspective view that the agents are too close, shows that the inter-agent collision avoidance constraint is enforced. Right: Final situation at $t=23$. $h_C$ is removed for better visibility.}
        \label{fig:top_view}
    \vspace{-0.52cm}
\end{figure}

\section{CONCLUSION}\label{sec:conclusion}
In this paper, we presented a rendezvous algorithm based on the leader-follower scheme and distributed MPC with robustness to communication losses. The algorithm is designed for autonomous landing of multiple quadrotors on moving unmanned surface vehicles. The convergence analysis of the algorithm is presented and the effectiveness of the proposed algorithm is demonstrated with the simulation of a landing scenario.In the future work, we aim to include the disturbances in the analysis and quantify the upper bounds such that the convergence is preserved. Moreover, it will be interesting to include learning methods in order to speed-up the computation.

\addtolength{\textheight}{-2cm}   



\begin{appendices}
\renewcommand{\thesection}{\Alph{section}}
\renewcommand{\thesectiondis}[2]{\Alph{section}:}

\section{Proof of Theorem~\ref{thm:ch6_convergence_one_follower}}
\label{sec:ch6_app_thm1}
\begin{proof}
The proof proceeds in two parts. The first part shows the boundedness of $V_N(x_f(t),z_l(t))$ for all $t\geq 0$, and the second part proves that the value function decreases for all states in the region of attraction at every time step.

By the definition of the value function 
\begin{align*}
V_N(x_f(t),z_l(t)) 
&\geq \left \| \hat{x}_f(0|t) - \hat{z}_l(0|t) \right \|_{Q_f}^2 \\
&= \left \| x_f(t) - z_l(t) \right \|_{Q_f}^2 
\end{align*}
Moreover, for $x_f(t) \in \mathcal{X}_f^{ROA}(z_l)$
\begin{equation*}
\left \| x_f(t) - z_l(t) \right \|_{Q_f}^2  \leq V_N(x_f(t),z_l(t)) \leq V_{N,max}
\end{equation*}
thus, there exists $\bar{\gamma} \geq 1$ for which $
\bar{\gamma}\left \| x_f(t) - z_l(t) \right \|_{Q_f}^2  = V_{N,max}$ and $\gamma \geq \bar{\gamma} \geq 1$ such that
\begin{equation*}
 V_N(x_f(t),z_l(t)) \leq V_{N,max} \leq \gamma \| x_f(t) - z_l(t) \|_{Q_f}^2 
\end{equation*}

The second part of the proof uses the contraction of the error from Assumption~\ref{ass:ch6_follower_capability}. 
Let a feasible (suboptimal) input sequence for the next time step be $\tilde{u}_f(\cdot|t+1)$, defined as
\begin{equation*}
   \tilde{u}_f(k|t+1) = \begin{cases}
       u_f^*(k+1|t) &\text{for } k=0,1,...,N-1 \\
       \kappa(\hat{x}_f^*(N|t),\hat{z}_l(N|t))  &\text{for } k=N
   \end{cases}
\end{equation*}
consisting of the shifted optimal input from the previous step and some appended $u_f^N = \kappa(\hat{x}_f^*(N|t),\hat{z}_l(N|t)) \in \mathcal{U}_f$ that satisfies Assumption~\ref{ass:ch6_follower_capability}.
Assuming there are no disturbances, then $
    x_f(t+1) = \tilde{x}_f(0|t+1) = f_f(x_f(t),u_f^*(0|t))
$
and 
\begin{equation*}
   \tilde{x}_f(k|t+1) = \begin{cases}
       \hat{x}_f^*(k+1|t) &\text{for } k<N \\
       f_f(\hat{x}_f^*(N|t),u_f^N) &\text{for } k=N
   \end{cases}
\end{equation*}

Moreover, due to Assumption~\ref{ass:ch6_follower_capability}
\begin{equation*}
   \left \| \tilde{x}_{f}(N|t+1) - \hat{z}_l(N|t+1)  \right \|^2 \leq \rho \left \| \hat{x}_f^*(N|t) - \hat{z}_l(N|t)  \right \|^2
\end{equation*}
and
\begin{align*}
    \| \tilde{x}_{f}(N|t+1) &- \hat{z}_l(N|t+1)   \|_{Q_{f}}^2 \\
   &\leq \rho \frac{\lambda_{max}(Q_f)}{\lambda_{min}(Q_f)} \left \| \hat{x}_f^*(N|t) - \hat{z}_l(N|t)  \right \|_{Q_{f}}^2
\end{align*}
Because $ V_N(x_f(t),z_l(t)) \leq \gamma \| x_f(t) - z_l(t) \|_{Q_f}^2 $, then $\forall~k~\geq~1$, there exist $N>1$ and $\gamma \geq \bar{\gamma} \geq 1$ such that 
\begin{align*}
    \| \hat{x}_f^*(k|t) - \hat{z}_l(k|t) \|_{Q_f}^2 &\leq \frac{V_N(x_f(t),z_l(t))}{N} \\
    &\leq \frac{\gamma}{N}\| x_f(t) - z_l(t) \|_{Q_f}^2
\end{align*}
Let us consider the the value function at time step $t+1$, then
\begin{align*}
V_N&(x_f(t+1),z_l(t+1)) \\
&\leq \sum_{k=0}^{N}  \left \| \tilde{x}_{f}(k|t+1) - \hat{z}_l(k|t+1)  \right \|_{Q_{f}}^2 \\
&= \sum_{k=0}^{N-1}  \left \| \tilde{x}_{f}(k|t+1) - \hat{z}_l(k|t+1)  \right \|_{Q_{f}}^2 \\
&\phantom{=}+ \left \| \tilde{x}_{f}(N|t+1) - \hat{z}_l(N|t+1)  \right \|_{Q_{f}}^2 \\
&= \sum_{k=1}^{N}  \left \| \hat{x}_{f}^*(k|t) - \hat{z}_l(k|t)  \right \|_{Q_{f}}^2 \\
&\phantom{=}+ \left \| \tilde{x}_{f}(N|t+1) - \hat{z}_l(N|t+1)  \right \|_{Q_{f}}^2 
\end{align*}
\begin{align*}
&= \sum_{k=0}^{N}  \left \| \hat{x}_{f}^*(k|t) - \hat{z}_l(k|t)  \right \|_{Q_{f}}^2 - \left \| \hat{x}_{f}^*(0|t) - \hat{z}_l(0|t)  \right \|_{Q_{f}}^2 \\
&\phantom{=}+ \left \| \tilde{x}_{f}(N|t+1) - \hat{z}_l(N|t+1)  \right \|_{Q_{f}}^2 \\
&= V_N(x_f(t),z_l(t)) - \left \| x_{f}(t) - z_l(t)  \right \|_{Q_{f}}^2 \\
&\phantom{=}+ \left \| \tilde{x}_{f}(N|t+1) - \hat{z}_l(N|t+1)  \right \|_{Q_{f}}^2 \\
&\leq V_N(x_f(t),z_l(t)) - \left \| x_{f}(t) - z_l(t)  \right \|_{Q_{f}}^2 \\
&\phantom{=}+ \rho\frac{\gamma}{N} \frac{\lambda_{max}(Q_f)}{\lambda_{min}(Q_f)} \left \| x_{f}(t) - z_l(t)  \right \|_{Q_{f}}^2 \\
&\leq V_N(x_f(t),z_l(t)) - \alpha_N \left \| x_{f}(t) - z_l(t)  \right \|_{Q_{f}}^2
\end{align*}
where $\alpha_N:= 1- \rho \frac{\gamma}{N} \frac{\lambda_{max}(Q_f)}{\lambda_{min}(Q_f)}$. Thus by choosing $N > N_0:= \bar{\gamma} \frac{\lambda_{max}(Q_f)}{\lambda_{min}(Q_f)}$, $\alpha_N > 0$. 
Finally, the value function $V_N(x_f(t),z_l(t))$ is decreasing for all $t\geq 0$. 

Using the decrease property and the boundedness in the region of attraction $\mathcal{X}_f^{ROA}(z_l)$ proven in the first part, $V_N(x_f(t),z_l(t))$ is a Lyapunov function in $\mathcal{X}_f^{ROA}(z_l)$.  Thus, the error $\left \| x_{f}(t) - z_l(t)  \right \|$ exponentially goes to zero, which concludes the proof. 
\vspace{-0.25cm}
\end{proof}
\end{appendices}

\bibliographystyle{unsrt}
\bibliography{root}

\end{document}